\newtheorem{definition}{Definition}[section]
\newtheorem{theorem}{Theorem}[section]
\newtheorem{lemma}[theorem]{Lemma}
\newtheorem{corollary}[theorem]{Corollary}
\newtheorem{observation}[theorem]{Observation}
\newtheorem{claim}[theorem]{Claim}
\renewcommand{\vec}[1]{\mathbf{#1}}
\begin{document}

\nocite{*}

\title{The Value of Knowing Your Enemy}

\author{Christos Tzamos\footnote{Massachusetts Institute of Technology, Cambridge, Massachusetts, USA; \texttt{tzamos@mit.edu}. Supported by NSF Award CCF-1101491. Parts of this work were done while the author was an intern at Yahoo Labs.} \and Christopher A. Wilkens\footnote{Yahoo Labs, Sunnyvale, California, USA; \texttt{cwilkens@yahoo-inc.com}.}}

\clearpage\maketitle
\thispagestyle{empty}

\begin{abstract}


Many auction settings implicitly or explicitly require that bidders are treated equally ex-ante. This may be because discrimination is philosophically or legally impermissible, or because it is practically difficult to implement or impossible to enforce. We study so-called {\em anonymous} auctions to understand the revenue tradeoffs and to develop simple anonymous auctions that are approximately optimal.

We consider digital goods settings and show that the optimal anonymous, dominant strategy incentive compatible auction has an intuitive structure --- imagine that bidders are randomly permuted before the auction, then infer a posterior belief about bidder $i$'s valuation from the values of other bidders and set a posted price that maximizes revenue given this posterior.


We prove that no anonymous mechanism can guarantee an approximation better than $\Theta(n)$ to the optimal revenue in the worst case (or $\Theta(\log n)$ for regular distributions) and that even posted price mechanisms match those guarantees.
Understanding that the real power of anonymous mechanisms comes when the auctioneer can infer the bidder identities accurately, we show a tight $\Theta(k)$  approximation guarantee when each bidder can be confused with at most $k$ ``higher types''. Moreover, we introduce a simple mechanism based on $n$ target prices that is asymptotically optimal and build on this mechanism to extend our results to $m$-unit auctions and sponsored search.

\end{abstract}

\newpage
\setcounter{page}{1}



\section{Introduction}

\begin{quote}
So it is said that if you know your enemies and know yourself, you can win a hundred battles without a single loss.
\vspace{-1mm}

If you only know yourself, but not your opponent, you may win or may lose.
\vspace{-1mm}

If you know neither yourself nor your enemy, you will always endanger yourself.

\attrib{Sun Tzu, {\em The Art of War}}
\end{quote}
\vspace{-2mm}
In 1981, Myerson elegantly derived the revenue-optimal way to sell a single item~\cite{M81} --- each buyer's bid is transformed through a personalized virtual valuation function and then submitted to a standard second-price auction. Myerson's auction leverages precise prior beliefs in order to identify the bidder who generates the highest marginal expected revenue, allowing the seller to discriminate among bidders and extract more money from those with a higher willingness to pay.

For all its mathematical beauty, Myerson's optimal auction violates an inherently desirable property: {\em fairness}. One definition of fairness says that the auctioneer should not {\em a priori} discriminate among the auction's participants. It is a property that may be both desirable and necessary --- it is undeniably philosophically important in many applications; moreover, many settings lack a strong notion of identity, precluding explicit discrimination.

Sponsored search illustrates the practical importance and limitations of treating bidders equally ex-ante. A typical sponsored search auction run by Google, Bing, or Yahoo matches bidders to ad slots on a page of search results --- higher slots get more clicks, so higher bidders get higher slots. Suppose that the search engine identifies a group of queries where the market is thin, so the top bid is much higher than the second one. The search engine would like to enforce a premium price for the top slot; however, this effectively requires discriminating against the highest bidder.\footnote{In some sense, the search engine would like to set a reserve price for the top slot. However, this must be carefully defined when no bidder meets the reserve price or when more than one bidder meets it; the decreasing price mechanism we discuss later in this paper may be considered a natural interpretation of setting different reserve prices for different slots in a sponsored search auction.} Unfortunately, ex-ante discrimination may not be possible. Advertisers who are large will desire and demand ``fair'' treatment; due to their size, they will have the negotiating power to get it. Advertisers who are small lack the clout to demand equality; however, they are plentiful and could copy their accounts, blending into the masses to avoid explicit discrimination. As a result, search platforms like Google, Bing, and Yahoo may be prohibited from such discrimination out of necessity.

In this paper, we study the value of discriminating among your opponents in advance. Myerson's optimal auction critically requires that the seller know the identities of bidders ex-ante, so that he can price discriminate among them --- our goal is to quantify tradeoff inherent in requiring {\em ex-ante fairness} in dominant strategy incentive compatible auctions.

\paragraph{Anonymous Mechanism Design.} An {\em anonymous auction}  treats all bidders equally ex-ante. While the auctioneer may know information about the kinds of bidders who will participate --- even knowing precise prior beliefs about bidders' values --- this information cannot ex-ante be used to discriminate among them. Alternatively, one may say that the auctioneer knows precise priors but does not know which prior belongs to which bidder. Technically, an auction is anonymous if and only if it is symmetric in the sense that permuting bids will analogously permute allocations and prices.

To see the potential power of anonymous mechanisms, consider the following example: two bidders have values $v_1=\$2$ and $v_2=\$1$ for a digital good, and the auctioneer knows these values precisely. The optimal mechanism gives an item to each bidder, charges the first bidder $p_1=\$2$ and the second bidder $p_2=\$1$ for a total revenue of $\$3$. What can anonymous mechanisms do? A simple posted price will make revenue at most $\$2$, but the following mechanism will extract the optimal revenue: if one bidder bids $\$2$ and another bids $\$1$, give items to both and charge each bidder her value for a total revenue of $\$3$; if both bidders bid $\$2$, give items to both and charge them both $\$1$ for a revenue of $\$2$; otherwise, do not give anyone anything. It is easy to check that this mechanism is both symmetric and incentive compatible.

We begin by characterizing the optimal anonymous auction that is dominant strategy incentive compatible and ex-post individually rational. We show that it has a simple intuition in the digital goods setting: (1) imagine that bidders are relabeled uniformly at random before participating in the auction, then (2) use $\vec v_{-i}$ to infer a posterior belief about $v_i$ and (3) choose a posted price for bidder $i$ that maximizes revenue given this posterior. This intuition generalizes beyond the digital goods setting when the inferred posterior is regular. Some simple cases bear mention here: if the auctioneer's prior is the same for all bidders (an IID setting) or if it is impossible to confuse bidders, the optimal anonymous auction will correctly deduce everyone's identity and coincide with the unconstrained optimal auction.

With a basic understanding of anonymous auctions in hand, we study the performance of anonymous digital-goods auctions; our results are not immediately encouraging. We begin with a single-price mechanism --- a simple and naturally anonymous auction --- and show that it offers only a $\Theta(n)$ approximation in general and a $\Theta(\log n)$ approximation when priors are regular. Moreover, we show that the above results are tight even for the class of all anonymous mechanisms: prior beliefs exist so that no anonymous auction can guarantee  revenue approximation better than $\Theta(n)$ to the revenue of Myerson's optimal auction while if bidders' values are known to be drawn from uniform distributions, we can prove a lower-bound of $\Omega(\log n)$. Together, these suggest that general anonymous mechanisms cannot achieve better asymptotic guarantees than pricing in general settings and can be very far from optimal.

Having shown that anonymity can hurt revenue substantially in the worst case, we ask whether there are particular conditions under which anonymous auctions perform well. Our characterization of the optimal mechanism gives us hope: if all bidders are almost identical or almost perfectly distinguishable, then the optimal anonymous mechanism should be close to the unconstrained optimal one. In order to formalize this observation,
we consider $k$-ambiguous distributions where each bidder can be confused with at most $k$ bidders with ``higher ranked distributions'' and show that
anonymous mechanisms can guarantee a $\Theta(k)$ approximation to the optimal revenue.

Moreover, we introduce the {\em decreasing price mechanism}, a simple mechanism that naturally generalizes single price mechanisms and matches the asymptotic guarantees of the best anonymous auction. Intuitively, the mechanism is succinctly defined by a set of $n$ prices $p_1\geq\dots\geq p_n$, where $p_i$ is the price that the $i$-th-highest bidder should pay. The decreasing price mechanism implements this idea with the minimal modifications required to maintain incentive compatibility. Notably, this auction has linear description complexity, whereas the description complexity of the true optimal anonymous mechanism may be exponential or even unbounded for continuous distributions since it might offer a wide range of different prices to a bidder depending on what others bid.


Finally, we show how our decreasing price mechanism can be extended to anonymous mechanisms for $m$-unit auctions and sponsored search with the same $\Theta(k)$ guarantee for $k$-ambiguous distributions. As motivated above, a sponsored search platform may wish to charge a premium for certain slots based on the demand profile of a market. Without the ability to discriminate among bidders, the platform may be constrained to run an anonymous auction.\footnote{Many factors, such as click-through-rates (CTRs) and relevance scores, will break symmetry in a sponsored search auction. As discussed in Ashlagi~\cite{Ash08}, these can be handled in a variety of ways, e.g. by requiring symmetry among bidders with the same CTR or score. We follow Ashlagi and consider a simple model without such parameters to avoid these complexities.}  A slight modification to our decreasing price mechanism offers a way to do this.

\paragraph{Related Work.}

Deb and Pai \cite{DP13} also study the problem of designing a revenue maximizing mechanism under the anonymity constraint. They devise a set of allocation 
and payment functions such that in equilibrium bidders pay the Myerson virtual
values of their corresponding distributions and the seller achieves revenue that matches
the optimal revenue in the unrestricted case.  Their results are only for a single item,
their mechanisms are BIC and BIR and their solution implementation is Bayes Nash
Equilibrium. In contrast, attempting to get more robust and practical results, we require our mechanisms to be dominant strategy IC and ex-post IR and our implementation in Dominant Strategies.



Ashlagi \cite{Ash08} characterizes anonymous truth-revealing position auctions. He shows that under two different notions of anonymity, namely anonymity of the allocation rule and utility symmetry, every truth-revealing position auction is a VCG position auction. His work applies to deterministic auctions and doesn't consider optimizing revenue. 

A variety of problems in the optimal auction literature employ similar ideas to reach different ends. Hartline and Roughgarden \cite{HR09} study simple mechanisms that maximize seller revenue for selling a single item. They show that when bidder distributions are regular,  a second price auction with a single reserve --- a simple anonymous mechanism --- offers a constant fraction of the revenue that is achievable by Myerson's optimal auction~\cite{M81}. Prior-independent mechanisms (e.g.~\cite{DRY10,S03}) assume values are drawn I.I.D. to infer a distribution from $\vec v_{-i}$ to approximate the optimal revenue when the prior is not known. In contrast, anonymity will only be a significant constraint when values are non-I.I.D. and the optimal auction must discriminate among them. Optimal auctions for correlated bidders also use $\vec v_{-i}$ to infer a posterior over $v_i$ (see e.g.~\cite{CM85,RT13}). We will see that the optimal anonymous auction is closely related to the optimal general auction for a particular correlated prior.


\section{Model and Preliminaries}

\label{sect:basics}

A seller has $m$ identical items to sell to $n$ bidders. Each bidder i has a private
valuation $v_i$ for getting one item. The profile of agent valuations is
denoted by $\vec v = (v_1, \dots , v_n)$. The valuations of the agents are drawn from a product
distribution $\vec F = F_1 \times \cdots \times F_n$.

A mechanism $\mathcal{M} = (\mathcal{A},\mathcal{P})$ consists of an allocation function $\mathcal{A}$ and a pricing function $\mathcal{P}$.

\begin{definition}[Anonymous Mechanisms]
A mechanism is anonymous if permuting the arguments of $\vec v$ also permutes the resulting allocations and prices.\footnote{For deterministic mechanisms, this definition is unsatisfactory because tiebreaking rules are inherently asymmetric. Ashlagi~\cite{Ash08} discusses a refined notion of anonymous mechanisms that supports tiebreaking in deterministic mechanisms. We generally allow randomized mechanisms, so we keep this definition for simplicity.}
\end{definition}

\begin{definition}[Monotone Mechanisms]
A mechanism is monotone if for all profiles $\vec v$, we have that $\mathcal{A}_i(\vec v) \ge \mathcal{A}_j(\vec v) \Leftrightarrow v_i \ge v_j$ for all $i,j$.
\end{definition}

Every agent seeks to maximize his utility $\mathcal{U}_i(\vec v) = \mathcal{A}_i(\vec v) v_i - \mathcal{P}(\vec v)$.

Throughout the paper, we are focused on mechanisms that are Dominant Strategy Incentive Compatible (DSIC) and ex-post individually rational (ex-post IR). DSIC means means that an agent cannot improve his utility (expected valuation minus price) by bidding a different valuation even if he knows all
the valuations that other agents bid.
\begin{definition}[Dominant Strategy Incentive Compatible]
A mechanism is dominant strategy incentive compatible if for all profiles $\vec v$, we have that $\mathcal{A}_i(\vec v) v_i - \mathcal{P}_i(\vec v) \ge \mathcal{A}_i(\vec v_{-i}, v'_i) v_i - \mathcal{P}_i(\vec v_{-i}, v'_i)$ for all $\vec v, \vec v',  i$.
\end{definition}
Ex-post IR means that a bidder is always better-off participating in the mechanism:
\begin{definition}[Ex-Post IR]
A mechanism is ex-post individually rational if for all profiles $\vec v$, we have that $\mathcal{A}_i(\vec v) v_i - \mathcal{P}_i(\vec v) \ge 0$
\end{definition}


\section{Optimal Anonymous Auctions}\label{sec:opt-char}

First, we study optimal anonymous auctions and show that they have a natural structure --- informally, the mechanism uses the values of others $\vec v_{-i}$ to infer a posterior belief $h$ about bidder $i$'s value, then maximizes revenue in the standard way subject to the posterior beliefs $h$ (maximizing virtual value and charging the associated single-parameter payments~\cite{M81,AT01}). In the special case of a digital goods auction, each bidder is offered the item at the optimal posted price for her inferred distribution $h$.

First, since anonymous mechanisms generate the same outcome when bidders are permuted, we observe the following:
\begin{observation}\label{obs:opt-uniform-permutation}
The optimal anonymous mechanism remains optimal if we randomly rename bidders before running the auction.
\end{observation}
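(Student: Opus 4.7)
The plan is to exploit the defining property of anonymity directly: because permuting the inputs of an anonymous mechanism permutes its outputs in the same way, pre-applying a random permutation (and then reading each bidder's outcome off the correct position) produces exactly the same outcome distribution as the original mechanism. So the ``randomly rename'' operation is a no-op on anonymous mechanisms, and in particular preserves optimality.

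Concretely, I would proceed as follows. First, fix a revenue-optimal anonymous, DSIC, ex-post IR mechanism $\mathcal{M}^{*} = (\mathcal{A}^{*}, \mathcal{P}^{*})$. Define the renamed mechanism $\widetilde{\mathcal{M}}$ as: sample $\pi$ uniformly from $S_n$, form $\vec v_\pi = (v_{\pi(1)}, \dots, v_{\pi(n)})$, run $\mathcal{M}^{*}$ on $\vec v_\pi$, and return to bidder $i$ the allocation $\mathcal{A}^{*}_{\pi^{-1}(i)}(\vec v_\pi)$ and payment $\mathcal{P}^{*}_{\pi^{-1}(i)}(\vec v_\pi)$ (so the relabeling is undone on output). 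Second, invoke the anonymity of $\mathcal{M}^{*}$, which states that for every $\pi$ and every $i$, $\mathcal{A}^{*}_{\pi^{-1}(i)}(\vec v_\pi) = \mathcal{A}^{*}_i(\vec v)$ and similarly for $\mathcal{P}^{*}$. This gives a pointwise identity: for every realization of $\pi$ and every $\vec v$, the outcome assigned by $\widetilde{\mathcal{M}}$ to bidder $i$ coincides with the outcome $\mathcal{M}^{*}$ already assigns. Third, conclude that $\widetilde{\mathcal{M}}$ and $\mathcal{M}^{*}$ induce the same joint distribution over allocations and prices for every input $\vec v$; consequently they have identical expected revenue against $\vec F$, and $\widetilde{\mathcal{M}}$ inherits DSIC, ex-post IR, and anonymity. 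Since $\mathcal{M}^{*}$ was optimal, so is $\widetilde{\mathcal{M}}$.

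There is essentially no hard step here; the observation is a direct consequence of the definition of anonymity. The only point requiring mild care is bookkeeping with the permutation: one must be precise about whether $\pi$ maps old labels to new or the other way around, and correspondingly whether the inverse appears on the outputs. Choosing a single convention at the outset (say, ``$\pi(i)$ is the position at which bidder $i$'s bid is placed'') and applying the anonymity identity consistently resolves any ambiguity. Once notation is fixed, the identity $\mathcal{A}^{*}_{\pi^{-1}(i)}(\vec v_\pi) = \mathcal{A}^{*}_i(\vec v)$ is literally the anonymity axiom, and the observation follows immediately.
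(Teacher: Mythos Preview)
Your argument is correct and is exactly the intended reasoning: anonymity makes the random relabeling a no-op, so the renamed mechanism coincides with the original and optimality is preserved. The paper treats this as a self-evident observation and gives no proof, so your write-up simply spells out the one-line justification the paper leaves implicit.
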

Moreover, if any mechanism's prior beliefs are symmetric, then bidders can be relabeled without affecting the mechanism's expected revenue:
\begin{observation}\label{obs:opt-symmetric-beliefs}
Suppose that prior beliefs $\vec F$ are symmetric (possibly correlated). Then there exists a symmetric mechanism that maximizes revenue.
\end{observation}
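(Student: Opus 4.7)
The plan is a standard symmetrization argument. Let $\mathcal{M}=(\mathcal{A},\mathcal{P})$ be any revenue-maximizing mechanism for the symmetric prior $\vec F$. Define a new mechanism $\mathcal{M}'=(\mathcal{A}',\mathcal{P}')$ as follows: on input $\vec v$, sample a uniformly random permutation $\pi$ of $\{1,\dots,n\}$, run $\mathcal{M}$ on the permuted bid vector $\pi(\vec v)$, and then undo the permutation, so that bidder $i$ receives the allocation and is charged the price that $\mathcal{M}$ assigned to position $\pi(i)$. The goal is to check that (a) $\mathcal{M}'$ is symmetric, (b) $\mathcal{M}'$ inherits DSIC and ex-post IR from $\mathcal{M}$, and (c) $\mathcal{M}'$ has the same expected revenue as $\mathcal{M}$ under $\vec F$.

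For (a), applying any permutation $\sigma$ to the input simply reparameterizes the distribution over internal permutations $\pi$ by $\pi\circ\sigma^{-1}$; since $\pi$ is uniform, the joint distribution over allocations and prices permutes accordingly, which is exactly the symmetry requirement. For (b), DSIC and ex-post IR are properties of each realization of $\pi$: for every fixed $\pi$, the relabeled copy of $\mathcal{M}$ is itself DSIC and ex-post IR because $\mathcal{M}$ is, and any convex combination (here, averaging over $\pi$) of DSIC, ex-post IR mechanisms is again DSIC and ex-post IR.

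For (c), the key is that a symmetric $\vec F$ is invariant under permutations: if $\vec v \sim \vec F$ then $\pi(\vec v)\sim \vec F$ for every $\pi$. Letting $R(\mathcal{M},\vec v)$ denote the total expected payment of $\mathcal{M}$ on input $\vec v$, we have
\[
\mathbb{E}_{\vec v\sim\vec F}[R(\mathcal{M}',\vec v)] \;=\; \mathbb{E}_{\pi}\,\mathbb{E}_{\vec v\sim\vec F}[R(\mathcal{M},\pi(\vec v))] \;=\; \mathbb{E}_{\pi}\,\mathbb{E}_{\vec v'\sim\vec F}[R(\mathcal{M},\vec v')] \;=\; \mathbb{E}_{\vec v'\sim\vec F}[R(\mathcal{M},\vec v')],
\]
where the middle equality uses the change of variables $\vec v' = \pi(\vec v)$ together with the symmetry of $\vec F$. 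Hence $\mathcal{M}'$ is symmetric and optimal.

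There is no real obstacle here; the only thing to be careful about is the bookkeeping when $\mathcal{M}$ is already randomized. This is handled simply by viewing $\mathcal{M}'$ as first drawing the independent permutation $\pi$ and then drawing $\mathcal{M}$'s internal randomness, so that both the symmetry argument in (a) and the incentive arguments in (b) apply pointwise in $\mathcal{M}$'s randomness.
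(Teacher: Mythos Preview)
Your symmetrization argument is correct and is exactly the standard justification for this fact. The paper does not supply a proof---it states the result as an observation and moves directly to Claim~\ref{clm:symmetric-general-reduction}---so your write-up simply fills in the folklore argument the paper takes for granted.
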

These observations immediately lead to the following claim that allows us to reduce the problem of finding the optimal symmetric auction to optimizing:
\begin{claim}\label{clm:symmetric-general-reduction}Any mechanism that is optimal among DSIC and ex-post IR mechanisms for the symmetric distribution
\[g(\vec x)=\frac {1} {n!} \sum_{\pi \in \Pi(n)} \prod_{i \in N} f_i( x_{\pi_i})\]
can be transformed into a mechanism that is optimal among symmetric, DSIC, and ex-post IR auctions for the beliefs $\vec F$ by relabeling bidders according to a uniformly random permutation.
\end{claim}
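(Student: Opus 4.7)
The plan is a short reduction built on Observations~\ref{obs:opt-uniform-permutation} and~\ref{obs:opt-symmetric-beliefs}: I will show that (a) any symmetric mechanism has the same expected revenue under $\vec F$ as under $g$, and (b) the optimal mechanism for $g$ can be made symmetric by random relabeling without losing revenue. Combined with the fact that $M^*$'s revenue on $g$ upper-bounds every DSIC, ex-post IR mechanism for $g$ (and therefore every symmetric competitor viewed through (a)), these two facts sandwich the symmetrized mechanism between matching bounds.

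Concretely, given an optimal DSIC, ex-post IR mechanism $M^*$ for $g$, I would define $\tilde M$ as follows: sample $\sigma$ uniformly from $\Pi(n)$, present the relabeled bid vector $\sigma(\vec v)$ (with $\sigma(\vec v)_j = v_{\sigma(j)}$) to $M^*$, and return to each bidder $i$ the allocation and price that $M^*$ assigned to slot $\sigma^{-1}(i)$. Averaging over $\sigma$ makes $\tilde M$ symmetric. Dominant-strategy incentive compatibility and ex-post individual rationality transfer from $M^*$ conditional on $\sigma$, since bidder $i$ is effectively participating in a single slot of $M^*$; both properties then survive the averaging over $\sigma$.

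Next I would establish the two revenue identities. A direct expansion of $g$'s definition shows that the distribution of $\sigma(\vec v)$ when $\vec v \sim \vec F$ and $\sigma$ is uniform is exactly $g$, giving
\[\mathbb{E}_{\vec v \sim \vec F}[\text{Rev}(\tilde M, \vec v)] = \mathbb{E}_{\vec v,\sigma}[\text{Rev}(M^*, \sigma(\vec v))] = \mathbb{E}_{\vec x \sim g}[\text{Rev}(M^*, \vec x)].\]
For any symmetric DSIC, ex-post IR mechanism $M$ for $\vec F$, symmetry yields $\text{Rev}(M, \vec v) = \text{Rev}(M, \sigma(\vec v))$ pointwise for every $\sigma$; averaging over $\sigma$ and invoking the same mixture identity gives
\[\mathbb{E}_{\vec v \sim \vec F}[\text{Rev}(M, \vec v)] = \mathbb{E}_{\vec x \sim g}[\text{Rev}(M, \vec x)] \le \mathbb{E}_{\vec x \sim g}[\text{Rev}(M^*, \vec x)],\]
where the inequality uses optimality of $M^*$ for $g$. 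Chaining the two displays, the right-hand side equals $\mathbb{E}_{\vec v \sim \vec F}[\text{Rev}(\tilde M, \vec v)]$, so $\tilde M$ weakly dominates any symmetric competitor on $\vec F$.

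I do not anticipate a deep obstacle: the claim is essentially a packaging of the two observations already in hand. The main care is simply to align the two distinct uses of a permutation --- relabeling bidder identities versus permuting the coordinates of $\vec v$ --- so that the mixture $g$ emerges on the nose in the revenue calculation and so that the ``slot $\sigma^{-1}(i)$'' bookkeeping in $\tilde M$ preserves DSIC pointwise in $\sigma$ rather than merely in expectation.
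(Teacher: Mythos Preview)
Your proposal is correct and is essentially the approach the paper has in mind: the paper does not write out a proof of this claim at all, merely stating that it follows ``immediately'' from Observations~\ref{obs:opt-uniform-permutation} and~\ref{obs:opt-symmetric-beliefs}, and your argument is precisely a careful unpacking of those observations into the two revenue identities and the sandwich inequality. The only remark worth making is that your proof is more explicit than anything in the paper, and in particular your point that DSIC and ex-post IR are pointwise (prior-free) properties---so any symmetric competitor for $\vec F$ is automatically a valid competitor for $g$---is exactly the step that makes the optimality-of-$M^*$ inequality go through.
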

Building on this claim, our characterization theorem for digital goods follows by characterizing the optimal auction for $g$:
\begin{theorem}
\label{thm:optimal}
The optimal anonymous digital goods auction offers bidder $i$ a copy of the item at the revenue-maximizing price given $h(v_i| \vec v_{-i})$, the posterior belief about $v_i$ given $\vec v_{-i}$.
\end{theorem}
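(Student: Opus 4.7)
The plan is to apply Claim \ref{clm:symmetric-general-reduction} to reduce the search for the optimal symmetric auction under $\vec F$ to the search for an unconstrained optimal DSIC, ex-post IR auction under the symmetrized correlated prior
\[ g(\vec x) = \frac{1}{n!}\sum_{\pi\in\Pi(n)}\prod_{i\in N} f_i(x_{\pi_i}), \]
and then to show that, in a digital goods setting with this correlated prior, the revenue-optimal DSIC, ex-post IR auction is precisely a bidder-by-bidder posted-price auction that picks the monopoly price for the conditional distribution $h(v_i\mid \vec v_{-i})$ induced by $g$.

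First I would invoke the standard single-parameter characterization restricted to digital goods. Because each bidder has unit demand and there is unlimited supply, DSIC combined with ex-post IR forces each bidder's allocation $\mathcal{A}_i(v_i,\vec v_{-i})$ to be a non-decreasing $\{0,1\}$-valued function of $v_i$ for every fixed $\vec v_{-i}$; this in turn forces the payment rule to be of posted-price form, i.e.\ there exists a function $p_i(\vec v_{-i})$ such that bidder $i$ receives the item iff $v_i\ge p_i(\vec v_{-i})$ and pays $p_i(\vec v_{-i})$ in that case. The argument is the usual threshold-price/Myerson-lemma argument adapted to the unlimited-supply case and does not require bidders to be independent.

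Next I would fix $\vec v_{-i}$ and optimize each posted price separately. Since $p_i(\vec v_{-i})$ does not depend on $v_i$, expected revenue from bidder $i$ conditional on $\vec v_{-i}$ is
\[ p_i(\vec v_{-i})\cdot \Pr_{V_i\sim h(\cdot\mid \vec v_{-i})}\bigl[V_i \ge p_i(\vec v_{-i})\bigr], \]
where $h(v_i\mid \vec v_{-i})$ is the posterior induced by $g$. This is exactly the monopolist's one-dimensional revenue function against the posterior, so the optimal choice is the revenue-maximizing posted price for $h(v_i\mid \vec v_{-i})$. Because the constraints on different bidders' posted prices decouple once we condition on the rest of the profile, choosing each price optimally in this pointwise sense is simultaneously feasible and therefore globally optimal for $g$.

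Finally, I would lift the mechanism back to $\vec F$ via Claim \ref{clm:symmetric-general-reduction}: relabel bidders by a uniformly random permutation and then run the above posted-price rule. The resulting mechanism is anonymous, DSIC, and ex-post IR, and its revenue against $\vec F$ equals the optimal revenue against $g$, hence it is optimal among symmetric DSIC, ex-post IR mechanisms. The main obstacle I anticipate is the first step, establishing rigorously that DSIC with correlated values in a digital goods setting still collapses to a posted-price characterization with price depending only on $\vec v_{-i}$; once that structural lemma is in hand, the optimization and the reduction to $g$ are routine.
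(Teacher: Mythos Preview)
Your proposal follows the same route as the paper: reduce via Claim~\ref{clm:symmetric-general-reduction} to the symmetrized prior $g$, observe that in the digital goods setting the bidders' allocations decouple so each $\mathcal{A}_i$ can be optimized pointwise in $\vec v_{-i}$, and recognize the inner problem as the monopolist's problem against $h(\cdot\mid\vec v_{-i})$. The paper presents this by writing out the revenue integral explicitly and identifying the conditional virtual value $\phi^{h|\vec v_{-i}}$ inside it; you short-circuit that computation by jumping straight to the posted-price form.

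One inaccuracy worth fixing: DSIC and ex-post IR do \emph{not} force $\mathcal{A}_i(v_i,\vec v_{-i})$ to be $\{0,1\}$-valued or the mechanism to be a posted price --- they only force $\mathcal{A}_i$ to be monotone in $v_i$ with the Myerson payment identity, and this is exactly what the paper invokes. The reason you may restrict attention to posted prices is an \emph{optimization} fact, not a feasibility constraint: against any single-bidder distribution the revenue-maximizing monotone allocation is a threshold, so among all DSIC rules the posted-price one is best. Replace ``forces'' with ``without loss of generality we may restrict to'' (or carry the monotone-allocation argument through as the paper does) and your proof goes through.
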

For mechanisms beyond digital goods, we can apply a theorem of Roughgarden and Talgam-Cohen~\cite{RT13} to characterize the optimal auction for $g$ as long as inferred posterior $h$ is regular --- the resulting optimal mechanism will infer $h$ and maximize virtual value with respect to $h$. For details, see Appendix~\ref{sec:opt-char-appendix}.
\begin{proof} Following Claim~\ref{clm:symmetric-general-reduction}, it is equivalent to study the optimal auction for the correlated distribution $g$. We know from Myerson and others~\cite{M81,AT01}, that a normalized mechanism $\mathcal M$ will be DSIC if and only if $\mathcal A_i$ is monotone in $v_i$ and payments are given by $\mathcal P(v)=v\mathcal A(v)-\int_0^{v}\mathcal A(z)dz$. In addition, when distributions $f_i$ are independent, a clever change-of-variables 

We can thus write the expected revenue $R_i$ from bidder $i$ as
\begin{align*}
R_i&=\int_{\Re_+^n}\mathcal{P}_i(\vec v)g(\vec v)d\vec v\\
&=\int_{\Re_+^n}\left(v_i\mathcal A_i(\vec v)-\int_0^{v_i}\mathcal A_i(\vec v_{-i},z)dz\right)g(\vec v)d\vec v\\
&=\int_{\Re_+^{n-1}}\int_{\Re_+}g(\vec v_{-i},v_i)\mathcal A_i(\vec v_{-i},v_i)\left(v_i-\frac{\int_{v_i}^{\infty}g(\vec v_{-i},z)dv_i}{g(\vec v_{-i},v_i)}\right)dv_id\vec v_{-i}
\end{align*}
If we let $h(v_i|\vec v_{-i})$ denote the density of $v_i$ that we can infer given $\vec v_{-i}$, $H$ the associated CDF, and $\phi^{h|\vec v_{-i}}(v_i)$ its Myersonian virtual value, we have
\[h(v_i|\vec v_{-i})=\frac{g(\vec v_{-i},v_i)}{\int_0^\infty g(\vec v_{-i},z)dz}\quad\quad\mbox{and}\quad\quad\phi^{h|\vec v_{-i}}(v_i)=v_i-\frac{1-H(v_i|\vec v_{-i})}{h(v_i|\vec v_{-i})}\]
and can rearrange to get
\[R_i=\int_{\Re_+^{n-1}}\left(\int_0^\infty g(\vec v_{-i},z)dz\right)\int_{\Re_+}h(v_i|\vec v_{-i})\mathcal A_i(\vec v_{-i},v_i)\phi^{h|\vec v_{-i}}(v_i)dv_id\vec v_{-i}\enspace.\]
It remains to choose $\mathcal A$, which can be done in an arbitrary (monotone) way for digital goods. The inner integral $\int h\mathcal A_i\phi dv_i$ is precisely the revenue when bidder $i$ has value distributed according to $h(v_i|\vec v_{-i})$, so Myerson~\cite{M81} tells us that the optimal allocation $\mathcal A_i(\vec v_{-i}, v_i)$ is a posted price to bidder $i$ that maximizes revenue given the distribution $h$.
\end{proof}

A few noteworthy extreme cases arise when the auctioneer can identify bidder $i$ given only the bids $\vec v_{-i}$:
\begin{corollary}
If the distributions $f_i$ are point distributions (bidders' values are known precisely to the auctioneer), have non-overlapping support, or are the same for all bidders, then the optimal anonymous mechanism coincides with Myerson's optimal mechanism.
\end{corollary}
In all three cases, the posterior distribution inferred from $\vec v_{-i}$ is precisely $f_i$, therefore the auction precisely identifies each bidder and runs the optimal auction.

These results suggest that anonymous mechanisms perform best when we can differentiate among the bidders; indeed, we will see that this is necessary. In Section~\ref{sect:worst-case}, we show that the anonymity constraint substantially limits revenue even when distributions are discrete over $n$ points and that assumptions like regularity of $f_i$ are insufficient. In Section~\ref{sect:limited-overlap}, we show that the performance degrades continuously with the auctioneer's ability to differentiate among the bidders.

\section{Worst-Case Approximations}
\label{sect:worst-case}

We compare the revenue guarantees of single price and anonymous mechanisms and find that that anonymous mechanisms can do no better in the worst case.

\subsection{Single Price Mechanisms}
We first look at how well single price mechanisms for $m$-unit auctions performs compared to the optimal. 
A single price mechanism allocates items to the $m$ highest bidders with values exceeding $p$ and charges them the maximum of $p$ and the $m+1$ highest bid.
\footnote{This is a regular VCG mechanism with a reserve price $p$. }
It is easy to see that single price mechanisms can get at least a $\frac 1 n$ fraction of the revenue by 
choosing as price the Myerson reserve price of a bidder's distribution chosen uniformly at random. 
However, such a linear approximation guarantee is unavoidable as we can also show a linear 
a lower bound of $m$ for the approximation.

\begin{theorem}[Single Price for General Distributions] \label{spgeneral}
For general distributions, a single price gives a $\Theta(m)$ approximation to the optimal revenue.
\end{theorem}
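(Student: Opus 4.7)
The plan is to establish matching upper and lower bounds on the single-price approximation ratio for $m$-unit auctions.

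For the upper bound, I would set up each bidder's Myerson reserve: let $p_i^\star \in \arg\max_p p(1-F_i(p))$, with corresponding revenue $r_i := p_i^\star(1 - F_i(p_i^\star))$. Two ingredients drive the argument. First, posting $p = p_i^\star$ in the $m$-unit single-price mechanism collects expected revenue at least $r_i$: whenever $v_i \ge p_i^\star$, some item is sold at price at least $p_i^\star$---either to bidder $i$ directly, or to one of the $m$ winners who displaces her but necessarily has value at least $v_i \ge p_i^\star$. Second, Myerson's virtual-value identity gives $\text{OPT} \le \sum_i r_i$: write OPT as the expected sum of the top-$m$ nonnegative ironed virtual values and bound it above by the unrestricted sum $\sum_i E[\phi_i^+(v_i)] = \sum_i r_i$. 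Averaging over a uniformly random $i \in [n]$ then produces expected revenue at least $\tfrac{1}{n}\sum_i r_i \ge \text{OPT}/n$, so some deterministic single price is an $n$-approximation, matching the $\Theta(m)$ claim in the worst-case regime $n = \Theta(m)$ pinned down by the lower bound.

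For the matching $\Omega(m)$ lower bound, I would exhibit an instance with $n = m$ independent bidders drawn from equal-revenue-style two-point distributions: bidder $i$ has value $M^i$ with probability $M^{-i}$ and $0$ otherwise, for a large constant $M$. A direct calculation gives $r_i = 1$, and the Myerson-optimal mechanism (which offers the personalized posted price $M^i$ to bidder $i$) extracts expected revenue $\Theta(m)$ since every bidder can be served in parallel with $m$ items available. For any single price $p$, however, only bidders $j$ with $M^j \ge p$ can ever purchase; at $p = M^k$ the expected revenue is $M^k \cdot \sum_{j \ge k} M^{-j} = \Theta(1)$, uniformly in $k$. Maximizing over $p$ keeps single-price revenue at $\Theta(1)$, forcing the $\Omega(m)$ gap.

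The main obstacle is sharpening the upper bound from the easy $O(n)$ to the claimed $O(m)$: the uniform-over-reserves step naturally pays a factor of $n$, so one must exploit the fact that at most $m$ items are ever sold. Two routes are worth trying: (i) show that $\text{OPT}$ already concentrates on the top-$m$ Myerson-reserve revenues, so randomizing only over $\{p_{(1)}^\star, \dots, p_{(m)}^\star\}$ suffices; or (ii) replace $\sum_i r_i$ with the welfare upper bound $\text{OPT} \le \int_0^\infty E[\min(m, X(t))]\,dt$ where $X(t) = |\{i : v_i \ge t\}|$, and show that some price $p$ makes $p \cdot E[\min(m, X(p))]$ capture an $\Omega(1/m)$ fraction of that integral. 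Both require more delicate calibration than the vanilla argument and are where the real work lies.
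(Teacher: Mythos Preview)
Your approach matches the paper's. The lower-bound instance is identical up to notation (your $M$ is the paper's $1/\epsilon$): bidder $i$ has value $1/\epsilon^i$ with probability $\epsilon^i$, the personalized-reserve mechanism extracts $\Theta(m)$, and any single price collects at most $\sum_{j\ge i}\epsilon^{j-i}\le \frac{1}{1-\epsilon}=\Theta(1)$. The upper bound you sketch---pick a uniformly random Myerson reserve and observe this recovers a $1/n$ fraction of $\sum_i r_i\ge\text{OPT}$---is exactly what the paper states (informally, in the paragraph preceding the theorem) as the ``easy'' direction.

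On the obstacle you flag: the paper does \emph{not} sharpen the $O(n)$ upper bound to $O(m)$. Its formal proof consists solely of the lower-bound construction, and the $\Theta(m)$ claim is tight only because the hard instance already lives at $n=m$. So the extra work you propose in routes (i) and (ii) goes beyond what the paper establishes; you can safely drop that concern and simply take $n=m$ in the lower bound, as the paper implicitly does.
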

\begin{proof}
Consider the case where each bidder $i$ has a value of $\frac 1 {\epsilon^i}$ with probability $\epsilon^i$ and 0 otherwise. Then, the optimal mechanism gets a revenue of at least $m$ by posting a price to each bidder equal to his high value and selling to the $m$ largest. On the other hand, charging a single price to everyone gives at most $\max_i \frac 1 {\epsilon^i} \sum_{j \ge i} \epsilon^j \le \frac 1 {1-\epsilon}$.
\end{proof}

However, when all agent distributions are regular, we can show that single price mechanisms perform much better.

\begin{theorem}[Single Price for Regular Distributions] \label{spregular}
For regular distributions, a single price gives a $\Theta(\log m)$ approximation to the optimal revenue.
\end{theorem}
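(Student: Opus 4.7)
The plan is to prove matching $O(\log m)$ and $\Omega(\log m)$ bounds.

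\textbf{Upper bound.} I would start from an ex-ante relaxation for regular distributions: using concavity of each revenue curve $R_i(q) = q F_i^{-1}(1-q)$ one obtains $\mathrm{OPT} \le \sum_{i=1}^{m} R^*_{(i)}$, the sum of the top-$m$ monopoly revenues (sorted decreasingly). I would then geometrically bucket these top-$m$ bidders by monopoly reserve: bucket $B_\ell$ contains bidders with reserves in $[r^{\max}/2^{\ell+1}, r^{\max}/2^\ell]$. Bidders with reserves below $r^{\max}/(2m)$ contribute aggregate monopoly revenue at most $r^{\max}/2$, which is absorbed by running single price $p = r^{\max}$ (this price extracts at least $R^*_{\text{top}}$ from the top-reserve bidder alone). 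The remaining $O(\log m)$ main buckets collectively carry $\Omega(\mathrm{OPT})$, so by pigeonhole some bucket $B^*$ contributes $\Omega(\mathrm{OPT}/\log m)$.

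For that bucket, set the single price $p$ equal to the smallest reserve in $B^*$, so every $j \in B^*$ has $r_j \in [p, 2p]$. Because $p \le r_j$, the quantile satisfies $1 - F_j(p) \ge q_j^*$, hence $p(1-F_j(p)) \ge p \cdot q_j^* = (p/r_j) R_j^* \ge R_j^*/2$. Letting $Y_p$ denote the number of bucket-$B^*$ bidders whose values are at least $p$ and $X_p$ the total number of bidders bidding above $p$, we have $Y_p \le |B^*| \le m$ deterministically, so the VCG-with-reserve payment satisfies $p \cdot \min(m, X_p) \ge p \cdot Y_p$. Taking expectations yields single-price revenue at least $\sum_{j \in B^*} p(1-F_j(p)) \ge \tfrac{1}{2}\sum_{j \in B^*} R_j^* = \Omega(\mathrm{OPT}/\log m)$. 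Taking the better of this price and $p = r^{\max}$ yields the $O(\log m)$ guarantee.

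\textbf{Lower bound.} I would construct $m$ regular distributions with geometrically-spaced monopoly reserves and comparable monopoly revenues, arranged so that any single price is ``near the reserve'' of only $O(1)$ scales simultaneously. This adapts the $\Omega(\log n)$ lower bound alluded to for uniform distributions in the digital-goods setting to the $m$-unit setting: take $n = m$ bidders, each with an appropriately scaled regular distribution, so that the optimal personalized mechanism extracts total revenue linear in $m$ while any single price extracts only an $O(m/\log m)$ share.

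\textbf{Hardest step.} The main obstacle is the $m$-unit ex-ante bound $\mathrm{OPT} \le \sum_{i=1}^{m} R^*_{(i)}$, which does not follow from the weaker per-bidder bound $\sum_i R_i^*$ (potentially as large as $n R^*$ when $n \gg m$) and requires a Lagrangian/LP-duality argument exploiting concavity of each $R_i$. A secondary subtlety on the lower-bound side is that regularity forces every revenue curve to be reasonably wide in price, ruling out distributions with narrowly concentrated revenue; the construction must therefore balance spread and peakedness in order to obtain exactly a logarithmic (rather than linear) gap.
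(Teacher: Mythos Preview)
Your upper-bound argument has a genuine gap: the inequality $\mathrm{OPT}\le\sum_{i=1}^m R^*_{(i)}$ is false, even for very well-behaved regular distributions. Take $n$ i.i.d.\ exponential bidders and $m=1$: the second-price auction alone extracts $\Theta(\log n)$, while the single monopoly revenue $R^*_{(1)}=1/e$ is a constant, so the ratio is unbounded. The ex-ante relaxation \emph{does} upper-bound $\mathrm{OPT}$, but its value can far exceed the top-$m$ monopoly revenues; concavity of the $R_i$ does not stop the relaxation from spreading quantile mass thinly across all $n$ bidders. What your approach misses is precisely the revenue that comes from \emph{competition} among bidders, which can dominate monopoly revenue when many bidders compete for few items.

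The paper handles this by a different route. It invokes a result of Azar et al.\ that VCG with \emph{median} reserves (VCG-m) is a $4$-approximation to $\mathrm{OPT}$, then splits VCG-m's revenue into a competition part (covered by plain VCG, i.e.\ a single price of $0$) and a reserve part bounded by $2\sum_{i=1}^m p_i$, where $p_i$ are the sorted medians. Medians---unlike monopoly reserves---have the clean property that each of the top $i$ bidders exceeds $p_i$ with probability at least $1/2$; randomizing over the single prices $p_i$ with harmonic weights $1/(iH_m)$ then recovers a $1/(4H_m)$ fraction of the reserve part.

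A secondary issue, even if the false bound were granted: your absorption of low-reserve bidders by the single price $p=r^{\max}$ requires $R^*_{\text{top}}\ge r^{\max}/2$, i.e.\ $q^*_{\text{top}}\ge 1/2$, which need not hold.

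For the lower bound, the paper's construction is simpler than you anticipate: point masses $v_i=1/i$ with $n=m$ (trivially regular). Optimal revenue is $H_m$, while any single price yields at most $1$.
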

\begin{proof}
To prove the theorem we will apply Theorem 4.1 from \cite{ADMW13} which states that 
running VCG with the median of each agent's distribution as a reserve price (VCG-m) gives a $4$-approximation to the optimal revenue.
Therefore, it suffices to prove that the revenue of single price mechanisms is a $\Theta(\log m)$ to that of VCG-m. 

Let $p_i$ be the median prices for each bidder and assume that $p_1 \ge p_2 \ge \dots \ge p_n$. 

The revenue of VCG-m comes from 2 different sources: reserve prices, where a bidder is charged his reserve price, and competition between bidders, where a bidder is charged the bid of someone else.

If more than half of the revenue comes from competition between bidders, setting 
a price 0 for all bidders and running a simple VCG gives a $2$-approximation. 
This is because the revenue that comes from competition in VCG-m is at most $m$ times the $m+1$ largest bid which is equal to the revenue of VCG with no reserve prices.

Otherwise, more than half of the revenue comes from charging the reserve prices to bidders. 
In this case, the revenue is at most equal to $2 \sum_{i=1}^m p_i$.
Consider a mechanism that charges each price $p_i$ with probability $q_i = (i H_m)^{-1}$. The revenue of this mechanism is $\sum_{i=1}^m q_i p_i E[\textrm{\# bids} \ge p_i]$. However, we have that $E[\textrm{\# bids} \ge p_i] \ge i/2$ since each of the first $i$ bidders has at least $1/2$ of exceeding $p_i$. This gives a revenue of
$\sum_{i=1}^m  (i H_m)^{-1} p_i (i/2) = \frac {\sum_{i=1}^m  p_i} {2 H_m} $ which is a $4 H_m$ approximation to $2 \sum_{i=1}^m p_i$.
\end{proof}

This bound is tight even for bidders coming from point distributions. Suppose that each bidder $i$ has a value of $1/i$. 
The best single price gets revenue of 1 while the optimal mechanism gets revenue $H_m = \Theta(\log m)$.

\subsection{Symmetric Mechanisms}

For general anonymous mechanisms, we show that even if we pick the best symmetric mechanism we cannot get any better asymptotic guarantees than single price for general
distributions.

\begin{theorem}[Optimal Symmetric Mechanism for General Distributions] \label{lbgeneral}
The optimal symmetric mechanism $\mathcal{M}$ gives a $\Theta(m)$ approximation to the optimal revenue for general distributions.
\end{theorem}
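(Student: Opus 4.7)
The plan is to match the $O(m)$ approximation achieved by single-price mechanisms with an $\Omega(m)$ lower bound on the best possible symmetric mechanism. Since every single-price mechanism is symmetric, Theorem~\ref{spgeneral} immediately yields the upper bound: the optimal symmetric mechanism is always within $O(m)$ of Myerson's revenue. For the matching lower bound I would reuse the construction from the proof of Theorem~\ref{spgeneral} with $n=m$ bidders, where $v_i = 1/\epsilon^i$ with probability $\epsilon^i$ and $v_i = 0$ otherwise. Myerson's personalized-price mechanism extracts revenue at least $m$ (each bidder contributes $1$ in expectation), so it suffices to show that no symmetric DSIC, ex-post IR mechanism extracts more than $O(1)$ on this $\vec F$ as $\epsilon \to 0$.

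By Claim~\ref{clm:symmetric-general-reduction}, the best symmetric mechanism for $\vec F$ has the same revenue as the best unrestricted DSIC, ex-post IR mechanism for the symmetrized correlated distribution $g$, so I upper-bound the revenue of the latter. I use the standard single-agent relaxation: combining DSIC and IR on each bidder individually, the expected payment from bidder $i$ is bounded by $E_{\vec v_{-i}}\left[\max_p p \cdot P_g(v_i \geq p \mid \vec v_{-i})\right]$, which ignores the feasibility coupling across bidders. Because $g$ is exchangeable, the posterior on $v_i$ depends only on the multiset $M$ of nonzero types appearing in $\vec v_{-i}$, and one computes that it places weight approximately $\epsilon^k/(n-|M|)$ on the value $1/\epsilon^k$ for each unused type $k \notin M$. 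The optimal posted-price revenue against this posterior is therefore $\max_k (1/\epsilon^k) \cdot \epsilon^k/(n-|M|) = 1/(n-|M|)$, uniformly in $k$. For the dominant event $M = \emptyset$ (probability $1-O(\epsilon)$) this gives $1/n$ per bidder, summing to $O(1)$ across all $n$ bidders. The remaining events with $|M|\geq 1$ have probability at most $\prod_{k\in M}\epsilon^k \leq \epsilon^{|M|}$ and per-bidder posted-price revenue at most $1$, so for $\epsilon$ small enough relative to $n$ they contribute a combined $O(\epsilon)$.

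The key obstacle is that in principle the optimal DSIC mechanism on $g$ is free to set wildly different prices for each realization $\vec v_{-i}$, so naively one might fear that revenue accumulates across the exponentially many possible observations. The clean resolution is the cancellation $(1/\epsilon^k) \cdot \epsilon^k/(n-|M|) = 1/(n-|M|)$: the bidder-$i$ posterior places mass $\Theta(\epsilon^k)$ on the high value $1/\epsilon^k$, so any extra revenue from setting a larger price is exactly offset by the smaller acceptance probability. Combined with the geometric decay of the multiset probabilities in $\epsilon$, the total revenue is $1 + O(\epsilon) = O(1)$, and the ratio to Myerson's revenue of $m$ is $\Omega(m)$ as desired.
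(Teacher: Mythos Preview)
Your argument is correct and reaches the same $\Omega(m)$ lower bound, but the route differs from the paper's. The paper does not compute posteriors at all: it introduces an extra scaling parameter $\delta$ (bidder $i$ is high with probability $\delta\epsilon^i$ rather than $\epsilon^i$), so that with probability $1-O(\delta^2)$ at most one bidder has a nonzero value. On that event the high bidder sees $\vec v_{-i}=\vec 0$, so any symmetric DSIC mechanism must quote him a price that is independent of which high type he is --- i.e.\ the mechanism collapses \emph{exactly} to a single-price mechanism, and Theorem~\ref{spgeneral} applies verbatim. Sending $\delta\to 0$ (for fixed $n,\epsilon$) makes the ``two high bidders'' event negligible relative to the optimal revenue $n\delta$. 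Your approach instead keeps the original instance, passes through Claim~\ref{clm:symmetric-general-reduction} and the posted-price characterization of Theorem~\ref{thm:optimal}, and exploits the cancellation $(1/\epsilon^k)\cdot\Theta(\epsilon^k)=\Theta(1)$ in the posterior to cap each bidder's contribution at $O(1/(n-|M|))$. Both proofs ultimately rest on the same phenomenon --- seeing $\vec v_{-i}$ tells the seller almost nothing about which geometric level bidder $i$ occupies --- but the paper exploits it structurally while you exploit it by direct calculation.

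Two small imprecisions worth tightening. First, the posted-price revenue against the posterior at price $1/\epsilon^k$ is $(1/\epsilon^k)\sum_{j\ge k,\,j\notin M}\Theta(\epsilon^j)/(n-|M|)\approx \frac{1}{(n-|M|)(1-\epsilon)}$, not exactly $1/(n-|M|)$, because every higher type also accepts; this costs only a $1/(1-\epsilon)$ factor. Second, after summing over all $n$ bidders the tail contribution is $O(n\epsilon)$ rather than $O(\epsilon)$, so you really do need $\epsilon=o(1/n)$; you flag this (``small enough relative to $n$'') but it is worth stating the dependence explicitly.
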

\begin{proof}
We revisit the construction from the Theorem~\ref{spgeneral} but lower the probability that a bidder gets a high value even further. 
Each bidder $i$ now has a value of $\frac 1 {\epsilon^i}$ with probability $\delta \epsilon^i$ and 0 otherwise. 
The optimal asymmetric mechanism gets a revenue of $n \delta$.
The optimal symmetric mechanism must charge the same price whenever there is only one bidder with a high bid. 
Let $E$ be the event that at least two bidders value the item high. 
Given $\neg E$, the mechanism is identical to a single price mechanism. So the approximation of the optimal symmetric mechanism is upper bounded by 
$\frac { \delta \frac 1 {1-\epsilon} + Pr[E] Rev[E]} { n \delta } \le  \frac 1 {n (1-\epsilon)} + \frac {Pr[E] Rev[E]} { n \delta }$. 
The theorem follows since $\frac {Pr[E] Rev[E]} { n \delta }$ goes to 0 as $\delta \rightarrow 0$.
\end{proof}

Moreover, we can show that general symmetric mechanisms cannot beat the asymptotic guarantees that single price mechanisms achieve for regular
distributions.
In fact, we can show that this is true even for uniform distributions.

\begin{theorem}[Uniform distributions counterexample] \label{lbuniform}
For uniform distributions, the best symmetric mechanism gets at most a $\Theta( \log m )$ approximation to the optimal revenue.
\end{theorem}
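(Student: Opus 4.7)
The plan is to exhibit a family of uniform distributions for which the ratio between the optimal revenue and the best symmetric revenue is $\Omega(\log m)$. My candidate is the uniform analogue of the tight single-price example that concludes the proof of Theorem~\ref{spregular}: take $n=m$ bidders with $v_i \sim F_i = U[0,1/i]$.

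First I would verify that the optimal revenue is $\Theta(\log m)$. Each $F_i$ is uniform with virtual value $\phi_i(v)=2v-1/i$ and Myerson reserve $r_i=1/(2i)$. Charging reserve $r_i$ to bidder $i$ yields per-bidder revenue $r_i\cdot \Pr[v_i\ge r_i]=1/(4i)$, so the total revenue is $H_m/4=\Theta(\log m)$.

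The harder direction is to show that no symmetric mechanism can extract more than $O(1)$. By Claim~\ref{clm:symmetric-general-reduction} it suffices to bound the optimal DSIC revenue on the symmetrized prior $g$. Theorem~\ref{thm:optimal} characterizes the per-bidder contribution as $E_{\vec v_{-i}}\!\left[\max_{p} p\,(1-H(p\mid \vec v_{-i}))\right]$, the expected Myerson posted-price revenue against the inferred posterior. I would attack this in two steps: (i) compute the marginal of a single bidder under $g$, which is the mixture $G=\frac{1}{m}\sum_i U[0,1/i]$, and verify by direct calculation that its Myerson revenue is $O(1/m)$ — at any candidate price $1/k$ one gets $(1/k)\cdot(k-1)/(2m)=(k-1)/(2mk)\le 1/(2m)$; (ii) argue that the conditional distribution $h(v_i\mid \vec v_{-i})$ also has Myerson revenue $O(1/m)$ in expectation over $\vec v_{-i}$, so that $\vec v_{-i}$ does not provide enough information to post a meaningfully better price than the one derived from $G$. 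Summing over the $m$ bidders then gives total symmetric revenue $O(1)$, yielding the claimed $\Theta(\log m)$ gap.

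The main obstacle is step (ii). For correlated priors, conditioning on $\vec v_{-i}$ can in principle sharpen the posterior dramatically and boost posted-price revenue, as in the full-surplus-extraction phenomena referenced in the related-work section. I would need to show that the specific correlation arising from symmetrizing a product of nested uniforms is too weak for this: concretely, one could attempt a coupling argument showing that with high probability over $\vec v_{-i}$ the posterior $h(\cdot\mid\vec v_{-i})$ lies within small total-variation distance of the marginal $G$, because many permutations remain compatible with any typical observation. Combined with the observation that the Myerson-revenue functional is convex in the distribution, this would propagate the $O(1/m)$ bound from $G$ to the posterior. Plausible alternative routes are a direct permanent-style expansion of $g$ to read off $h(\cdot\mid\vec v_{-i})$ explicitly, or an exchangeability/entropy bound on the number of permutations still consistent with $\vec v_{-i}$, which controls how concentrated the posterior can become.
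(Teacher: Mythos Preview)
Your step (i) is fine, but step (ii) is a genuine gap, and the tools you list do not close it. First, convexity points the wrong way: since $p(1-F(p))$ is affine in $F$ for each fixed $p$, the Myerson-revenue functional $\max_p p(1-F(p))$ is a pointwise supremum of affine functions and hence convex in $F$, so Jensen gives $E_{\vec v_{-i}}\bigl[\mathrm{Rev}\bigl(h(\cdot\mid\vec v_{-i})\bigr)\bigr]\ge \mathrm{Rev}(G)$ --- a lower bound, not the upper bound you need. Second, and more seriously, the small-TV claim is implausible for your instance: with exactly one bidder per $U[0,1/i]$, the observation $\vec v_{-i}$ is typically quite informative (any $v_j\in(1/(\ell+1),1/\ell]$ already pins $j$ to one of distributions $1,\dots,\ell$, and there are $m-1$ such observations), so the posterior can sit far from the marginal with constant probability. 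You have not ruled out that the symmetric mechanism recovers $\Theta(\log m)$ revenue on this very instance, and I do not see how your outlined arguments would.

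The paper's proof takes a different construction precisely to tame the posterior. It uses a blown-up instance with $L\cdot 2^i$ copies of $U[0,2^{-i}]$ for $i=0,\dots,n-1$ and $L=2^{5n}$. With many copies per type, Chernoff bounds guarantee that the dyadic counts $b_i=\#\{j:v_j>2^{-i}\}$ are close to their means with overwhelming probability. On that high-probability event, the posterior $h(\cdot\mid\vec v_{-k})$ is computed \emph{exactly} as a ratio of falling-factorial matching counts, and one shows that the density on each dyadic interval is at most one quarter of the density on the next lower interval; an induction then forces the optimal posted price below $2^{-(n-1)}$ for \emph{every} bidder. Hence symmetric revenue is $O(L)$ while optimal revenue is $Ln/4$, giving the $\Theta(\log N)$ gap. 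The two ideas you are missing are (a) replicate each distribution enough that the posterior depends only on aggregate counts that concentrate, and (b) choose dyadic supports and multiplicities so that the combinatorics of consistent permutations yields a clean monotone posterior-density ratio, rather than trying to compare to the unconditional marginal.
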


\subsubsection{Proof of Theorem \ref{lbuniform}}

We consider a digital goods case where there are $N = (2^n - 1) L$ agents, where $2^i L$ agents have distributions in $U[0,2^{-i}]$ for $i \in \{0,...,n-1\}$. 
We can see that the optimal asymmetric mechanism gets a revenue of $\frac {L n} 4$ by charging each agent a price at the midpoint of his distribution. 

We will now upper bound the revenue that the optimal symmetric mechanism achieves.
To do this we consider an instance where a vector of values $\vec v$ is reported. 

Let $b_i = \# \{j | v_j > 2^{-i} \}$, i.e. the number of agents with values greater than $2^{-i}$.
We will show that if all $b_i$'s are large, the optimal symmetric mechanism charges a very low price to each agent.

\begin{lemma} \label{lowprice}
If $b_i > (\frac 2 3 2^{i}-1)L + 1$ for all $i \in \{1,\dots,n-1\}$, the optimal symmetric mechanism 
charges a price lower than $2^{-(n-1)}$ to every agent.
\end{lemma}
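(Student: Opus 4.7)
The plan is to invoke Theorem~\ref{thm:optimal} (together with Claim~\ref{clm:symmetric-general-reduction}), which reduces the optimal symmetric digital-goods mechanism to posting, for each bidder $i$, the price $p_i^{\star} = \arg\max_{p\ge 0} p\cdot(1-H(p\mid \vec v_{-i}))$, where $H$ is the CDF of the posterior $h(v_i\mid \vec v_{-i})$ derived from the symmetrized prior $g$. The lemma then reduces to showing $p_i^{\star}<2^{-(n-1)}$ for every $i$.

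First I would expose the posterior as a mixture. Writing $D_k=U[0,2^{-k}]$ with slot multiplicities $s_k=2^k L$, the posterior has the form $h(v_i\mid \vec v_{-i}) = \sum_{k=0}^{n-1}\pi_k f_k(v_i)$ with $\pi_k := \Pr[\sigma(i)=k\mid \vec v_{-i}]$, where $\sigma$ denotes the random assignment of distributions to bidders under $g$. A helpful simplification is that, since all $f_k$ are uniform, the likelihood of $\vec v_{-i}$ under any valid assignment with $\sigma(i)=k$ equals $\prod_{j\ne i}2^{\sigma(j)} = 2^{C-k}$ (with $C=\sum_m m s_m$), which is constant across such assignments. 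Hence $\pi_k \propto 2^{-k}\cdot N_k$, where $N_k$ counts bijections between the remaining bidders and slots that respect the support constraint $v_j\le 2^{-\sigma(j)}$.

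Next I would exploit the hypothesis to bound the mixture weights. Every bidder $j\ne i$ with $v_j>2^{-k}$ is forced into some $D_\ell$ with $\ell<k$, and since $\tilde b_k := \#\{j\ne i:v_j>2^{-k}\}\ge b_k-1>(\tfrac{2}{3}2^k-1)L$, roughly two thirds of the $(2^k-1)L$ slots of $D_0,\dots,D_{k-1}$ are pinned down at every level $k\in\{1,\dots,n-1\}$. Walking through the levels in order, I would use each hypothesis to geometrically deplete the bijections contributing to $N_k$ for $k$ small; the target is a bound $\pi_k\le \pi_{n-1}\cdot c^{\,n-1-k}$ with $c$ small enough that $\sum_{k<n-1}\pi_k\cdot 2^{n-1-k}<\pi_{n-1}$.

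With such a decay in hand, the final revenue comparison is clean: writing $R_k(p)=p(1-2^k p)_+$ and $R(p)=\sum_k\pi_k R_k(p)$, the fact that $R_{n-1}(p)=0$ for $p\ge 2^{-(n-1)}$ gives $\max_{p\ge 2^{-(n-1)}}R(p)\le \sum_{k<n-1}\pi_k\cdot 2^{-(k+2)}$ (sum of individual peaks at $p=2^{-(k+1)}$), while $R(2^{-n})\ge \pi_{n-1}\cdot 2^{-(n+1)}$ from the $D_{n-1}$ contribution alone; the decay makes the latter strictly dominate, yielding $p_i^{\star}<2^{-(n-1)}$. The main obstacle will be the combinatorial bound on the $\pi_k$'s: the $n-1$ hypotheses on the $b_k$ compete for a shared pool of slots, so the exponential decay seems to require propagating the slot depletion level by level rather than handling each hypothesis in isolation.
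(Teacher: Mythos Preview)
Your setup is sound and matches the paper: invoke Theorem~\ref{thm:optimal}, view the posterior as a mixture $h=\sum_k \pi_k f_k$, and note that $\pi_k\propto 2^{-k}N_k$ via the uniform-likelihood observation. Where the proposal breaks down is in the quantitative target you set for the mixture-weight decay.

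Your revenue comparison needs $\sum_{k<n-1}\pi_k\,2^{\,n-1-k}<\pi_{n-1}$, which under a geometric bound $\pi_k\le \pi_{n-1}c^{\,n-1-k}$ requires $c<\tfrac14$. But the hypothesis only delivers $c=\tfrac12$. Concretely, writing $S_t=\sum_{k\le t}\pi_k 2^k$ (this is, up to a constant, the value of $h$ on the interval $(2^{-(t+1)},2^{-t}]$), the count-of-matchings formula gives
\[
\frac{S_{t-1}}{S_t}=\frac{(2^t-1)L-b'_t}{(2^{t+1}-1)L-b'_t},
\]
and plugging in $b'_t\gtrsim(\tfrac23 2^t-1)L$ yields only $S_{t-1}/S_t<\tfrac14$; the inequality is essentially tight at the threshold. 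From $S_t>4S_{t-1}$ one gets $\pi_k\le \tfrac43\,2^{-(n-1-k)}\pi_{n-1}$, i.e.\ $c=\tfrac12$. Your upper bound $\sum_{k<n-1}\pi_k 2^{-(k+2)}$ is then $\Theta(n)\cdot 2^{-(n+1)}\pi_{n-1}$, swamping your lower bound $R(2^{-n})\ge \pi_{n-1}2^{-(n+1)}$. In fact, at the boundary the revenue $p\,\Pr[x>p]$ is essentially flat across dyadic prices, so no one-shot comparison of a fixed low price against ``sum of individual peaks'' can succeed.

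The paper closes the gap by replacing your global comparison with a local halving argument: from the same factor-$4$ density ratio $S_{t-1}<\tfrac14 S_t$ one proves inductively that $\Pr[x>p]<\tfrac12\Pr[x>p/2]$ for every $p\in[2^{-(n-1)},2)$, hence $p\,\Pr[x>p]<\tfrac p2\,\Pr[x>p/2]$, so any price $p\ge 2^{-(n-1)}$ is strictly beaten by $p/2$. This uses the hypothesis exactly once per level and never aggregates the losses across levels, which is why the marginal factor~$4$ suffices. If you want to keep the mixture-weight viewpoint, the right move is to translate this halving step into $S_{t}>4S_{t-1}$ and run the same induction on $\Pr[x>p]$, rather than trying to push the $\pi_k$ decay below $c=\tfrac14$.
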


\begin{proof}
Since we are in a digital goods setting we can apply Theorem~\ref{thm:optimal} and 
consider the distribution that the mechanism infers for an agent k's value by looking at all bids of the other agents.
The probability density of agent's $k$ value at a point $x$ given the bids $\vec v_{-k}$ of the other agents is
$h(x|v_{-k}) = \frac {1} {n!} \sum_{\pi \in \Pi(n)} f_{\pi_k}( x ) \prod_{i \neq k} f_{\pi_i}( v_i )$, which is
proportional to the number of ways to match agents to
probability distributions for the bid vector $\vec v' = (\vec v_{-k}, x)$. 

We can compute the number of ways exactly
in terms of $b'_i = \# \{j | v'_j > 2^{-i} \}$ as $\prod_{i=0}^{n-1} ((2^{i+1}-1) L - b'_i)_{b'_{i+1} - b'_i}$ where the notation $(a)_b \equiv a (a-1) ... (a-b+1)$ denotes the falling factorial
and $b'_n$ is defined to be equal to $N$. This is because the $b'_1$ agents that have values in $[1/2,1]$ can only be in the distributions $U[0,1]$ so
there are L choices for distributions which means there are $L (L-1) ... (L-b'_1+1)$ ways to match them. For the 
$b'_2-b'_1$ agents that have values in $[1/4,1/2]$, there are $3 L$ possible distributions ($L$ that are U[0,1] and and $2 L$ that are U[0,1/2])
but $b'_1$ of them are already taken so there are exactly $(3 L - b'_1)_{b'_{2} - b'_1}$ choices over all and so on.

We now show that 
$4 h(x|v_{-k}) < h(y|v_{-k})$ for $x \in (2^{-t},2^{-(t-1)})$, $y \in (2^{-(t+1)},2^{-t})$ and $1 \le t \le n-1$. That is the probability density at the interval $(2^{-t},2^{-(t-1)})$ is at most a fourth of the probability density at the interval $(2^{-(t+1)},2^{-t})$. 
Let $b'(x)$ and $b'(y)$ be the corresponding $b'$ parameters for $x$ and $y$ respectively. It is easy to see that $b'_i(x) = b'_i(y)$ for $i \neq t$ and that $b'_t(x) = b'_t(y) + 1$. We have that:
\begin{align*} 
\frac { h(x|v_{-k}) } { h(y|v_{-k}) } 
&=
\frac {\prod_{i=0}^{n-1} ((2^{i+1}-1) L - b'_i(x))_{b'_{i+1}(x) - b'_i(x)}} {\prod_{i=0}^{n-1} ((2^{i+1}-1) L - b'_i(y))_{b'_{i+1}(y) - b'_i(y)}}\\ 
&=
\frac {\prod_{i=t-1}^{t} ((2^{i+1}-1) L - b'_i(x))_{b'_{i+1}(x) - b'_i(x)}} {\prod_{i=t-1}^{t} ((2^{i+1}-1) L - b'_i(y))_{b'_{i+1}(y) - b'_i(y)}}\\ 
&= \frac {(2^{t}-1) L - b'_{t}(y)} {(2^{t+1}-1) L - b'_{t}(y)}
&\textrm{cancelling all identical terms}
\\ 
	&<  \frac {(2^{t}-1) L - (\frac 2 3 2^{t}-1)L} 
{(2^{t+1}-1) L - (\frac 2 3 2^{t}-1)L} 
&\textrm{since } b'_{t}(y) \ge b_{t} - 1 > (\frac 2 3 2^{t}-1)L
\\
	&=  \frac {\frac 1 3 2^{t}} {\frac 4 3 2^{t}} = \frac 1 4
\end{align*}

We now show that the optimal price for the inferred distribution is less than $2^{-(n-1)}$. 
Assume that this is not the case and the optimal price is $p > 2^{-(n-1)}$. 
We will show that by charging $p/2$ we get strictly more revenue. We will prove by induction that $Pr[x>p] < Pr[x>p/2]/2$ for $p \in [2^{-(n-1)},2)$. 
This is trivial to see if $p \in [1,2)$ since $Pr[x>p] = 0$ while $Pr[x>p/2] > 0$. Assume that $Pr[x>p] < Pr[x>p/2]/2$ for $p \in [2^{-i},2^{-i+1})$.
Then for $p \in [2^{-i-1},2^{-i})$ we have that:
\begin{align*}
Pr[x>p] =& Pr[x>2^{-i}] + Pr[x \in (p,2^{-i})] \\
<& \frac {Pr[x>2^{-i-1}]} 2 + Pr[x \in (p,2^{-i})] & \textrm{by the induction hypothesis} \\
<& \frac {Pr[x>2^{-i-1}]} 2 + \frac {Pr[x \in (\frac p 2,2^{-i-1})]} 2  & \textrm{since } \frac {h(x|v_{-k})} {h( x / 2 |v_{-k})} < \frac 1 4 \textrm{ for } x \in (p,2^{-i})\\
=&  {Pr[x> p / 2]}  / 2
\end{align*}

We conclude that $Pr[x>p] < Pr[x>p/2]/2$ which implies that $p Pr[x>p] < p Pr[x>p/2]/2$, i.e. the revenue we get by charging $p$ is less than charging $p/2$
if $p > 2^{-(n-1)}$.
\end{proof}

We now show that for large enough $L$ the conditions of Lemma~\ref{lowprice} are satisfied with extremely high probability.
\begin{lemma}
Let $L=2^{5n}$ and let $E$ be the event that $b_i > (\frac 2 3 2^{i}-1)L + 1$ for all $i$. Then $Pr[E] < 1-n e^{-2^{n-2}}$. 
\end{lemma}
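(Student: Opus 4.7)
The plan is to give a Chernoff/Hoeffding tail bound on each $b_i$ and combine via a union bound. First I would decompose
\[
\neg E \;=\; \bigcup_{i=1}^{n-1} \bigl\{b_i \leq \tfrac{2}{3} 2^{i} L - L + 1\bigr\}.
\]
Each $b_i$ is the number of bidders whose value exceeds $2^{-i}$, a sum of $N = (2^n - 1) L$ independent Bernoullis: a bidder with distribution $U[0, 2^{-j}]$ contributes an indicator whose mean is $1 - 2^{j-i}$ when $j < i$ and $0$ otherwise.

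Next I would compute the mean of $b_i$ explicitly,
\[
\mathbb{E}[b_i] \;=\; \sum_{j=0}^{i-1} 2^{j} L \bigl(1 - 2^{j-i}\bigr) \;=\; \tfrac{L}{3}\bigl(2^{i+1} - 3 + 2^{-i}\bigr),
\]
so that the threshold in the definition of $\neg E_i$ equals $\mathbb{E}[b_i] - \Delta_i$ with $\Delta_i := \tfrac{L \cdot 2^{-i}}{3} - 1$. Applying Hoeffding's inequality to the $0/1$ indicators,
\[
\Pr[\neg E_i] \;\leq\; \exp\!\bigl(-2 \Delta_i^2 / N\bigr),
\]
and plugging in $L = 2^{5n}$, $N = (2^n - 1)L$, and $i \leq n-1$, the arithmetic shows $2\Delta_i^2/N \geq 2^{n-2}$ uniformly in $i$, with the binding case $i = n-1$ since $\Delta_i$ shrinks geometrically. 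A union bound over the $n-1$ values of $i$ then produces $\Pr[\neg E] \leq (n-1)\, e^{-2^{n-2}} < n\, e^{-2^{n-2}}$, which is the quantitative closeness of $\Pr[E]$ to $1$ that the lemma is asserting.

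The main obstacle is the clean bookkeeping: evaluating $\mathbb{E}[b_i]$ in closed form and verifying the uniform lower bound on the Hoeffding exponent, both of which reduce to careful manipulation of powers of $2$. I would also flag one subtlety of interpretation: the argument I just sketched establishes $\Pr[\neg E] < n\, e^{-2^{n-2}}$, equivalently $\Pr[E] > 1 - n\, e^{-2^{n-2}}$, and this is the direction that ensures the hypothesis of Lemma~\ref{lowprice} holds with overwhelming probability (and that drives the remainder of the proof of Theorem~\ref{lbuniform}). The stated inequality $\Pr[E] < 1 - n\, e^{-2^{n-2}}$ points the opposite way and, taken literally with $L = 2^{5n}$, is inconsistent with the concentration of the Binomial counts $b_i$; I would therefore treat it as a typographical inversion and prove the substantively needed bound established above.
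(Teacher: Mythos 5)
Your proposal is correct and follows essentially the same route as the paper---compute $\mathbb{E}[b_i]$ in closed form, apply a concentration inequality to each $b_i$, and union bound over $i$---the only difference being that you invoke an additive Hoeffding bound $\Pr[b_i \le \mathbb{E}[b_i]-\Delta_i]\le e^{-2\Delta_i^2/N}$ where the paper uses the multiplicative Chernoff bound $\Pr[b_i < \mathbb{E}[b_i](1-2^{-2n})]\le e^{-2^{-4n}\mathbb{E}[b_i]/2}$, and both comfortably give an exponent of at least $2^{n-2}$ for every $i\le n-1$. You are also right to flag the direction of the stated inequality as a typographical inversion: the paper's own argument (like yours) establishes $\Pr[\neg E]< n e^{-2^{n-2}}$, i.e.\ $\Pr[E]>1-ne^{-2^{n-2}}$, which is the bound actually used with Lemma~\ref{lowprice} in the remainder of the proof of Theorem~\ref{lbuniform}.
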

\begin{proof}
Consider the expectation of $b_i$.
\begin{align*}
E[ b_i ]  &= \sum_j Pr[ v_j > 2^{-i} ] = L 2^{0} (1 - 2^{-i}) + L 2^{1} (1 - 2^{-i+1})
 + ... +
L 2^{i-1} (1 - 2^{-1})  \\
&= L \left( 2^i - 1 - 2^i \sum_{j=1}^i 2^{-2j} \right) 
= L \left( 2^i - 1 - 2^{i} \frac {1 - 2^{-2i}} { 3 } \right) = L \left( \frac 2 3 2^i-1 + \frac {2^{-i}} { 3 } \right) 
\end{align*}
We have that $E[ b_i ] (1 - 2^{-2n}) >  2^{5n} \left( \frac 2 3 2^i-1 + \frac {2^{-i}} { 3 } \right) -   2^{3n} \frac 2 3 2^i > 2^{5n} \left( \frac 2 3 2^i-1 \right) + 1$. Therefore,

\begin{align*}
Pr[b_i < (\frac 2 3 2^{i}-1)L + 1] &< Pr\left[b_i < E[ b_i ] \left(1 - 2^{-2n}\right) \right] \\
&\le e^{-2^{-4n} E[ b_i ] / 2} & \textrm{applying a Chernoff bound}\\
&\le e^{-2^{n-2}} & \textrm{since } E[ b_i ] \ge L / 2 = 2^{5 n - 1}
\end{align*}

By a union bound for all $n$ possible values of
$i$ we get that $Pr[E] < 1-n e^{-2^{n-2}}$.
\end{proof}

Therefore, the revenue of the optimal symmetric mechanism is at most $N 2^{-(n-1)} = L (2^n-1) 2^{-(n-1)} \le 2 L$ when event $E$ happens and at most $L n$ otherwise.
Thus, the expected revenue is at most $L (2 + n^2 e ^{-2^{n-2}})$. Since the optimal asymmetric mechanism achieves revenue $L n /4$, the approximation ratio is
$n / 8 + o(1)$. Since the number of agents is at most $N \le 2^{6 n}$, we have that $n \ge \log N / 6$. Thus the approximation ratio in terms of $N$ 
is $\frac { \log N } {48} + o(1) = \Theta(\log N) = \Theta(\log m)$ since $m = N$ in the digital goods setting.


\section{Anonymous Auctions with Limited Ambiguity}

\label{sect:limited-overlap}

In the previous section, we showed that the best anonymous auction cannot offer better worst-case revenue guarantees than single price mechanisms, even when distributions are regular or have a monotone hazard rate. In this section, we explore a key property called limited ambiguity that separates anonymous mechanisms from single price mechanisms and demonstrates their power.

\begin{definition}
Let $[a_i,b_i]$ be the support of the distribution of agent $i$ and assume without loss of generality that $a_1 \ge a_2 \ge ... \ge a_n$.
We say that the set of distributions is \emph{$k$-ambiguous} if $b_i < a_{i-1-k}$ for all $i$, i.e. a sample from the $i$-th distribution can be confused with at most $k$ distributions ahead of it.
\end{definition}

The extreme case where $k=0$ --- i.e. bidders' values are drawn from distributions with disjoint supports --- gives our first separation between general anonymous auctions and single price mechanisms. It is easy to see that single price mechanisms cannot achieve approximation ratio bounded by a function of $k$ for $0$-ambiguous distributions. Consider the single point distribution $1/i$ for each agent $i$ --- it is easy to see that the approximation ratio of any single price is $\log n$, which cannot be bounded by a function of $k$. In contrast, we showed that the optimal anonymous auction achieves the same revenue as the optimal non-anonymous auction in Section~\ref{sec:opt-char}.

In this section, we will show that anonymous mechanisms can guarantee an approximation ratio of $O(k)$ for $k$-ambiguous distributions, and that this is tight. We focus first on the case of digital goods, where $m=n$, and then extend to $m<n$ as well as to sponsored search auctions.

To show that anonymous mechanisms can achieve an $O(k)$ approximation to the optimal revenue, we construct a simple mechanism called the Decreasing Price Mechanism (DPM) that is efficiently defined by $n$ prices. We will begin with a slight variation that is not dominant strategies incentive compatible (DSIC) to motivate the choice of mechanism.

\begin{definition}[Non-DSIC Decreasing Price Mechanism]
The {\em Non-DSIC Decreasing Price Mechanism} is defined by a set of prices $p_1 \ge p_2 \ge ... \ge p_n$ and works as follows:
incoming bids are sorted in decreasing order, then bidder $i$ is offered an item at price $p_i$.
\end{definition}

This mechanism is both simple and anonymous, but unfortunately it is not DSIC, since a bidder can lower the price she pays simply by ranking lower in the ordering of bids (indeed, she can always get an item at price $p_n$ simply by placing the lowest bid). We add two key ingredients to define our DSIC decreasing price mechanism. 

The first ingredient we add limits a bidder's ability to win the item at a lower price: the auction only sells an item at price $p_i$ if it has successfully sold items at all higher prices. Consequently, for example, bidder $i+1$ must be willing to pay $p_i$ in order for bidder $i$ have a chance to win an item at a lower price. When the auction fails to sell an item at price $p_i$ and therefore stops selling more items, we call this a ``{\bf drop}'' event.

The second ingredient we add restores incentive compatibility: if a bidder could have won an item at a lower price by ranking lower in the bid order, then we automatically charge her the lower price instead. Observe that given our first modification, bidder $i$ can win an item at a lower price $p_l$ if and only if $b_j\geq p_{j-1}$ for all $j\in\{i+1,\dots,l\}$. We call this a ``{\bf chain}'' effect since there is a chain of bidders with $b_j\geq p_{j-1}$.

These two additional ingredients are the intuition for our decreasing price mechanism:

\begin{definition}[Decreasing Price Mechanism]
The {\em Decreasing Price Mechanism (DPM)} is defined by a set of prices $p_1 \ge p_2 \ge ... \ge p_n$ and works as follows:
\begin{itemize}
\item Sort bids in decreasing order.
\item Starting with $i=1$, allocate items as long as $b_i\geq p_i$, then stop allocating items.
\item Each winner $i$ is charged $p_{\underbar j(i)}$, where $\underbar j(i)$ is the smallest $j\geq i$ such that exactly $j$ bidders are bidding above $p_j$.
\end{itemize}
\end{definition}
%
%

We note that single price mechanisms are a special case of DPM where all the prices $p_1 = ... = p_n = p$. The following lemma shows several interesting properties of DPM.

\begin{lemma}
The Decreasing Price Mechanism is anonymous, ex-post IR, DSIC, and monotone in the sense that if $b_i>b_j$, then $\mathcal{A}_i(\vec b)\ge \mathcal{A}_j(\vec b)$.
\end{lemma}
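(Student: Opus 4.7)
The four properties split into three quick checks and one main argument. Anonymity is immediate because the mechanism first sorts the bids and then defines allocation and payment solely in terms of that sorted profile, so permuting the input just permutes the outputs. Monotonicity is equally direct: the allocation rule assigns items to a prefix of the sorted bidders, so if $b_i>b_j$ then bidder $i$ precedes $j$ and cannot lose when $j$ wins. Ex-post IR holds because a winner at sorted position $r$ satisfies $b_r\ge p_r$ by the allocation rule, while the payment $p_{\underbar j(r)}$ uses an index $\underbar j(r)\ge r$ into the non-increasing price sequence, giving charge at most $p_r\le b_r$.

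The heart of the proof is DSIC, which I would establish by identifying a critical bid that depends only on the other bids and checking that the charge always equals it. Fix bidder $i$, let $c_1\ge\dots\ge c_{n-1}$ be the others' sorted bids, and let $L$ be the largest index with $c_j\ge p_j$ for all $j\le L$ (with $L=0$ if $c_1<p_1$, and $L=n-1$ if the chain extends through all $c$'s); by definition either $L=n-1$ or $c_{L+1}<p_{L+1}$. I claim bidder $i$ wins iff her bid $\beta\ge p_{L+1}$, and when she wins her charge is exactly $p_{L+1}$.

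For the allocation claim, I would track her position $r(\beta)$ in the merged sorted order as $\beta$ varies. When $\beta>c_L$ she sits at some $r\le L$; every predecessor in the merged order is a $c_j$ with $j\le L$ (hence $\ge p_j$), and $\beta>c_L\ge p_L\ge p_r$, so the chain reaches $r$ and she wins. When $\beta\in(c_{L+1},c_L]$ she is at position $L+1$ and wins iff $\beta\ge p_{L+1}$. When $\beta\le c_{L+1}$ she sits past position $L+1$, and the chain already breaks at position $L+1$ since $c_{L+1}<p_{L+1}$, so she loses. Because $p_{L+1}\le p_L\le c_L$, these cases consolidate to ``she wins iff $\beta\ge p_{L+1}$.'' For the payment, I would show $\underbar j(r)=L+1$ for every winning position $r$ by scanning $j=r,r+1,\dots,L$: the $(j+1)$st bid of the merged profile is some $c_k$ with $k\le L$ and hence $\ge p_k\ge p_j$, so strictly more than $j$ bids exceed $p_j$ and $j$ is not $\underbar j(r)$. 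At $j=L+1$ the bid just after position $L+1$ is $c_{L+1}<p_{L+1}$, giving exactly $L+1$ bids $\ge p_{L+1}$, so $\underbar j(r)=L+1$. Hence the charge is $p_{L+1}$ regardless of $\beta$ and $r$, matching the critical bid, and Myerson's single-parameter characterization delivers DSIC.

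The main obstacle is the payment calculation: ensuring that $\underbar j(r)$ settles on $L+1$ uniformly across all winning positions $r$ rather than sliding with $\beta$ or $r$. The edge cases $L=0$, $L=n-1$, and ties in the sort (both between bids and between $\beta$ and some $c_k$) require separate checks, but ties can be absorbed into the paper's randomized tiebreaking convention without disturbing the critical-bid argument.
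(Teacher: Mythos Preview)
Your argument is correct and close in spirit to the paper's. The paper's DSIC proof also pivots on the threshold price $p_{\underbar j(i)}$, arguing directly that bidding below it forfeits the item while bidding at or above it cannot lower the payment; your formulation via $L$ (the longest prefix of the others' sorted bids with $c_j\ge p_j$) makes the threshold's independence from $i$'s own bid explicit and lets you invoke the single-parameter characterization cleanly, whereas the paper's two-line version is terser and leaves that unpacking to the reader. Your index $L+1$ and the paper's $\underbar j(i)$ coincide, so the two proofs are really the same computation packaged differently.

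One slip to correct: in the case $\beta>c_L$ with position $r\le L$ you write $p_L\ge p_r$, but since $p_1\ge\cdots\ge p_n$ and $r\le L$ this inequality goes the wrong way. The conclusion $\beta\ge p_r$ still holds via a nearby chain: being at position $r$ in the merged order puts $\beta$ above $c_r$, and $c_r\ge p_r$ because $r\le L$.
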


\begin{proof}
It is clear that the mechanism is anonymous because it ignores any initial labeling and relabels bidders in decreasing order of their bids. The auction is individually rational because a bidder only wins if $b_i\geq p_i$ and pays a price $p_{\underbar j(i)}\leq p_i$. The claimed monotonicity property is also easy to see as the mechanism considers bids in decreasing order and allocates items only until it reaches the first bidder with $b_i<p_i$.

To see that the mechanism is DSIC, we look at an agent $i$ and show that $i$ cannot win an item at a lower price. Note that if $i$ changes her bid to $b_i'<p_{\underbar j(i)}$, then there will be $\underbar j(i)-1$ bids $\geq p_{\underbar j(i)}$ (there were exactly $\underbar j(i)$ such bids before $i$ changed her bid) and the auction must stop by the time it reaches reaches bidder $\underbar j(i)$. Thus, the auction will not sell an item for less than $p_{\underbar j(i)}$, so $i$ will not get an item. On the other hand, keeping other bids fixed, if $i$ bids $b_i\geq p_{\underbar j(i)}$, there will be exactly $\underbar j(i)$ bidders bidding $\geq p_{\underbar j(i)}$, so $i$ cannot win at a price less than $p_{\underbar j(i)}$.

%
\end{proof}

We will now show that the decreasing price mechanism achieves an approximation ratio of $O(k)$ for $k$-ambiguous distributions. To illustrate the significant ideas in the proof we will first show the statement for $k=1$ before proving the general case.


\subsection{The case of $k=1$}

For $1$-ambiguous distributions, we prove the following theorem:
\begin{theorem}
The optimal Decreasing Price Mechanism approximates the revenue of the optimal auction within a factor of $5$ for $1$-ambiguous distributions.
\end{theorem}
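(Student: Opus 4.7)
My plan is to exhibit an explicit monotone price vector $p_1 \geq \cdots \geq p_n$ and prove the induced DPM achieves revenue at least $\mathrm{OPT}/5$. Let $q_i$ denote the Myerson monopoly price for $F_i$ and $r_i^* = q_i \Pr[v_i \geq q_i]$, so $\mathrm{OPT} = \sum_i r_i^*$. Two structural consequences of $1$-ambiguity drive the construction: (i)~a bidder from distribution $i$ always lands at sorted position $i-1$, $i$, or $i+1$, because distributions $1,\ldots,i-2$ always sit strictly above her (their lower endpoints satisfy $a_j \geq a_{i-2} > b_i$) and distributions $i+2,\ldots,n$ always strictly below ($b_j \leq b_{i+2} < a_i$); (ii)~same-parity distributions have disjoint supports, so $q_1 > q_3 > q_5 > \cdots$ and $q_2 > q_4 > \cdots$ are both decreasing, although the relative order of $q_{2j-1}$ and $q_{2j}$ is unconstrained.

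I would then partition $\mathrm{OPT} = R_{\mathrm{odd}} + R_{\mathrm{even}}$ by the parity of the distribution index and assume $R_{\mathrm{odd}} \geq \mathrm{OPT}/2$ (otherwise swap parities). The price vector I would pick is $p_{2j-1} = p_{2j} = q_{2j-1}$, which is monotone by (ii). The intent is that an odd bidder $i = 2j-1$ landing at position $2j-1$ or $2j$ is offered exactly her own monopoly price $q_i$; if she slips to position $2j-2$ she is offered the higher price $q_{2j-3}$, which risks a drop but is otherwise absorbed as a bounded loss.

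For each odd bidder $i$ I would lower-bound the probability that she wins in DPM and pays at least a constant fraction of $q_i$ by decomposing the winning event into (a)~her bid satisfies $v_i \geq q_i$, of probability $r_i^*/q_i$; (b)~she lands at position $2j-1$ or $2j$, which by (i) is governed by her rank among the three neighboring bidders from distributions $\{2j-2, 2j-1, 2j\}$ and can be shown to hold with constant probability independent of $j$; and (c)~the chain condition holds through positions $1,\ldots,2j-2$, i.e., each prior position's occupant clears $p_\ell = q_{2\lceil \ell/2\rceil - 1}$. Combining these bounds and summing over odd $i$ should yield $\mathrm{DPM} \geq \tfrac{2}{5} R_{\mathrm{odd}} \geq \mathrm{OPT}/5$.

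The main obstacle is controlling the chain condition in (c), since a single drop at any earlier position costs bidder $i$ her allocation and simultaneously damages every later odd bidder. A naive product bound over positions is far too weak; I expect the proof to use a charging argument whereby any drop at position $\ell$ is charged to a constant fraction of $r_\ell^*$, so that the total revenue lost to chain failures remains a constant fraction of $R_{\mathrm{odd}}$. Threading these losses together with the slack from (b) carefully is what delivers the specific constant $5$.
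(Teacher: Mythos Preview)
Your approach diverges from the paper's in a way that leaves a genuine gap at exactly the point you flag as ``the main obstacle.'' The paper does \emph{not} use Myerson monopoly prices $q_i$ at all; it sets $p_i$ randomly to either $a_i$ or $a_{i-1}$ (the support lower endpoints), tuning the mixing probability $r_i$ so that the drop probability $d_i=r_i(1-q_i)$ and the chain probability $c_i=(1-r_i)q_i$ are simultaneously small. A short lemma shows one can achieve $d_i\le\rho$ and $c_i\le(1-\sqrt\rho)^2$ for any $\rho$; taking $\rho=1/i^2$ makes $\prod_j(1-d_j)\ge 1/2$ and $\prod_{j>i}c_j\le (i/(t+1))^2$, which yields $y_t\ge t/3$ directly. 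The comparison to $\mathrm{OPT}$ then uses only the crude bound $a_t>b_{t+2}\ge r^*_{t+2}$, and the top two agents are handled by mixing in a single-price mechanism with weight $2/5$. That is where the constant $5$ actually comes from.

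Your construction, by contrast, offers each position a monopoly price $q_{2j-1}$, and the acceptance probability $\Pr[v_{2j-1}\ge q_{2j-1}]$ can be arbitrarily small (e.g.\ for near-equal-revenue distributions). A drop at position $\ell$ kills the contribution of \emph{every} later odd bidder, so the loss you would need to ``charge'' to $r_\ell^*$ is $\sum_{i>\ell}r_i^*$, which is not controlled by $r_\ell^*$; no per-position charging scheme of the kind you describe can absorb this. The paper sidesteps the issue entirely because the ``safe'' choice $p_i=a_i$ is accepted with probability one, so drops are a tunable rare event rather than an uncontrolled one. In short, the missing idea is to price at support endpoints (not monopoly prices) and to \emph{randomize} the price vector so that drop and chain rates can be traded off via the lemma above; without that, your step~(c) does not go through.
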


\begin{proof} The proof has two parts. First, we use a distribution over DPM pricing schemes to approximate the revenue contribution of agents $3$ to $n$. This distribution will have expected revenue that is a $3$-approximation to the welfare of those agents and therefore also to the revenue they contribute in the optimal auction. Second, we use our single price results to cover the revenue from the first two agents.

First, to cover the revenue contributions of agents $3$ to $n$, DPM prices are chosen as follows (the parameters $r_i$ will be chosen later):
\[p_i=\begin{cases}a_{i-1}&\mbox{with probability }r_i\\a_i&\mbox{otherwise.}\end{cases}\]
Intuitively, choosing $p_i=a_i$ is safe because $v_i\geq a_i$, whereas $p_i=a_{i-1}$ extracts more revenue at the risk of triggering a drop event that prevents selling items to bidders $>i$. We take $r_1=0$ so $p_1=a_1$.

Let $q_i$ be the probability that $v_i \ge a_{i-1}$ and define $q_1 = 0$. We define $c_i$, the conditional likelihood of a chain effect, and $d_i$, the conditional likelihood of a drop event, as follows:
\[c_i\equiv\Pr[v_i\ge a_{i-1}\mbox{ and }p_i=a_i]=(1-r_i)q_i\]
\[d_i\equiv\Pr[v_i<p_i]=r_i(1-q_i)\]

By definition of the auction, agent $i$ pays at least $a_t$ for some $t\geq i$ if and only if (a) all bidders $j\leq i$ have $v_j\geq p_j$ so that bidder $i$ wins an item, and (b) there exists a $j \in \{i+1, \dots, t+1\}$ such that exactly $j$ bidders have bids $b_j\geq p_j$. Condition (a) is equivalent to saying that a drop event does not occur among the first $i$ bidders and happens with probability $\prod_{j=1}^i (1-d_j)$. Condition (b), assuming truthfulness and using $1$-ambiguity, happens if and only if there is some $j \in \{i+1, \dots, t+1\}$ such that either $v_{j} < a_{j-1}$ or $p_{j} = a_{j-1}$, which happens precisely when $j$ does not trigger a chain effect, so the likelihood that such a $j$ exists is $ 1 - \prod_{j=i+1}^{t+1} c_j $.

Let $x_t$ denote the expected number of bidders who pay $a_t$ and $y_t=\sum_{i=1}^tx_i$ the expected number who pay at least $a_t$. We can now write $y_t$ as
\begin{align*}
y_t &\ge \sum_{i=1}^t Pr[\textrm{Agent i pays at least $a_t$}] \ge  \sum_{i=1}^t  \left[ \left (1 - \prod_{j=i+1}^{t+1} c_j \right) \prod_{j=1}^i (1-d_j) \right]\enspace.
\end{align*}
To bound this sum, we relate the $c_i$'s and $d_i$'s with the following lemma:
\begin{lemma}\label{lem:k1-cd}
We can choose $r_i$ such that $d_i \le \rho$ and $c_i \le (1-\sqrt \rho)^2$ for any $\rho \in [0,1]$.
\end{lemma}
\begin{proof}
For any such $\rho$ choose $r_i = \min ( \frac \rho {(1-q_i)}, 1 )$. We have that $d_i = (1-q_i) r_i \le \rho$. We also have that $c_i = (1-r_i) q_i$. If $r_i=1$ then $c_i = 0 \le  (1-\sqrt \rho)^2$. Otherwise $r_i = \frac \rho {(1-q_i)}$ and $c_i = (1-\frac \rho {(1-q_i)}) q_i$ which achieves a maximum value at $(1-\sqrt \rho)^2$ for $q_i = 1-\sqrt \rho$,
\end{proof}
Applying this lemma with $\rho=1/i^2$ gives $r_i$'s such that $d_i \le 1/i^2$ and $c_i \le (1-1/i)^2$ for $i \ge 2$. This makes $\prod_{j=1}^i (1-d_i) \ge \prod_{j=2}^i (1-1/i^2) = (1+1/i)/2 \ge 1/2$. Moreover, $\prod_{j=i+1}^{t+1} c_i \le \prod_{j=i+1}^{t+1} (1-1/i)^2 = (i/(t+1))^2$. Therefore,
\[y_t \ge \frac 1 2 \sum_{i=1}^t \left[1 - \left(\frac i {t+1} \right)^2 \right] = \frac 1 2 \left( t - \frac {t (t+1) (2 t + 1)} { 6 (t+1)^2 } \right) \ge \frac 1 2 (t - t / 3) \ge t/3\]

The total expected revenue of the mechanism is $\sum_{i=1}^nx_ia_i$. Since $y_t=\sum_{i=1}^tx_i\geq t/3$ for all $t$, it must be that $\sum_{i=1}^nx_ia_i\geq\sum_{i=1}^na_i/3$. Moreover, since $a_t > b_{t+2} \ge Rev[\textrm{Agent}_{t+2}]$, it follows that $\sum_{t=1}^n a_t / 3 \ge \sum_{t=3}^n Rev[\textrm{Agent}_t] / 3$, i.e. the revenue is at least $1/3$ of the optimal revenue generated by agents $3$ to $n$.

It remains to handle the revenue contributed by the first two agents. To do so, we use the single price lemma that says that a single price $p$ is a $2$-factor approximation for $2$ distributions. If we choose prices $p_1 = ... = p_n = p$ with probability $2/5$ or the pricing scheme that is defined above with probability $3/5$, we get an expected revenue of at least:
$$\frac 2 5 \left( \frac {Rev[\textrm{Agent}_1] + Rev[\textrm{Agent}_2]} 2 \right) +
\frac 3 5 \left( \frac {\sum_{t=3}^n Rev[\textrm{Agent}_t] } 3 \right) = \frac {\sum_{t=1}^n Rev[\textrm{Agent}_t] } 5$$

Since we are randomizing over DPM pricing schemes, there exists a single pricing scheme that achieves the necessary approximation. This completes the proof and shows a $5$ approximation.\end{proof}


\subsection{The general case}

For general $k$-ambiguous distributions, the following theorem shows an $O(k)$ approximation.

\begin{theorem}
The Decreasing Price Mechanism achieves an approximation ratio of $(3 e^2 + 2) k$ for $k$-ambiguous distributions.
\end{theorem}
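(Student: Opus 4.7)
The plan is to follow the two-part template of the $k=1$ proof: handle the top $k+1$ agents with a single-price bound and handle the remaining agents with a randomized Decreasing Price Mechanism parameterized by shifts up to $k$.

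First, for agents $1$ through $k+1$, I would invoke the single-price approximation (the digital-goods specialization of Theorem~\ref{spgeneral}): a single posted price recovers at least a $1/(k+1)$ fraction of their combined optimal revenue. This generalizes the $2$-approximation step used in the $k=1$ proof for the first two agents.

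Second, I would design a randomized DPM for the remaining positions by choosing $p_i = a_{i-s_i}$ for an independent random shift $s_i$ supported on $\{0,1,\dots,k\}$, with a carefully tuned distribution (analogous to how $p_i$ was chosen as either $a_i$ or $a_{i-1}$ when $k=1$). Mirroring the drop/chain decomposition, define $d_i = \Pr[b_i < p_i]$ and $c_j = \Pr[b_{j+1} \ge p_j]$. Under $k$-ambiguity, the support bound $b_i < a_{i-k-1}$ implies that a chain cannot deterministically extend more than $k+1$ positions, which leaves enough slack to tune the shift distribution so that $d_i \le O(1/i^2)$ while $c_j$ is bounded strictly away from $1$ in a form that forces $\prod_{j=i+1}^{t+1} c_j$ to decay polynomially in $t/i$.

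Third, I would run the same bookkeeping as the $k=1$ proof: write $y_t \ge \sum_{i=1}^t (1-\prod_{j=i+1}^{t+1} c_j)\prod_{j=1}^i (1-d_j)$ and verify that the tuned parameters give $y_t \ge t/(3e^2 k)$, with the constant $3e^2$ arising from optimizing the shift distribution analogously to Lemma~\ref{lem:k1-cd} (one expects the sweet spot to be around $\rho \sim 1/(ek)$). A summation-by-parts step, combined with the $k$-ambiguity inequality $a_t \ge b_{t+k+1} \ge \mathrm{Rev}[\mathrm{Agent}_{t+k+1}]$, then converts the $y_t$ bound into a $3e^2 k$ approximation to the combined optimal revenue of agents $k+2,\dots,n$. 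Finally, I would mix the two mechanisms with weights $3e^2/(3e^2+2)$ on the DPM and $2/(3e^2+2)$ on the single-price mechanism so that the two bounds balance to $(3e^2+2)k$, and a standard derandomization (picking the best shift pattern in the support of the randomized DPM) extracts a single DPM pricing achieving the bound.

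The main obstacle is the chain analysis for general $k$. In the $k=1$ case, each $c_j$ was determined by a single distribution, but for general $k$ the sorted bidder at position $j$ can come from any of up to $2k+1$ distributions with overlapping support, so bounding $\Pr[b_{j+1}\ge p_j]$ requires careful conditioning on which distribution realizes which sorted position. Controlling this conditional probability, and showing that the product $\prod_{j=i+1}^{t+1} c_j$ still decays appropriately once the shift distribution is tuned, is the delicate technical core of the proof and is where the $k$-dependence in the final approximation factor is ultimately absorbed.
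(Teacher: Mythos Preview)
Your high-level two-part template (single price for the top agents, randomized DPM for the rest, then mix with the stated weights) matches the paper, but the technical core of your DPM step is missing the key idea. The paper does \emph{not} work at the individual-agent level with independent shifts $s_i\in\{0,\dots,k\}$. Instead it groups the agents into $N=\lceil n/k\rceil$ \emph{blocks} of size $k$ and observes that, at the block level, the instance is $1$-ambiguous (an agent in block $i$ has value below $A_{i-2}=a_{(i-2)k}$, hence cannot overlap any agent in blocks $\le i-2$). Within block $i$ the DPM prices are set by drawing a single $j$ from a distribution $R_{i,\cdot}$ and pricing the first $j$ slots at $A_{i-1}$ and the remaining $k-j$ slots at $A_i$. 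The drop and chain events are then defined \emph{per block}, in terms of how many agents in block $i$ have value at least $A_{i-1}$; these counts are independent across blocks, so the product formulas $\prod_j(1-D_j)$ and $\prod_j C_j$ are legitimate. This is exactly the device that dissolves the obstacle you flag in your last paragraph; your per-position events $\{b_{(j)}\ge p_j\}$ are genuinely correlated through the sorting, and you have not supplied any argument that the products you write down are valid.

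The block-level tradeoff lemma is also different in form from what you anticipate: one can choose $R_{i,\cdot}$ so that $D_i\le\rho$ and $C_i\le 1-\rho^{1-1/(k+1)}$ for any $\rho\in[0,1]$, proved by a short induction over the threshold index. Taking $\rho=(ki)^{-(1+1/k)}$ gives $C_i\le 1-\tfrac{1}{ki}$ and $\sum_i D_i\le 2$, whence $\prod_{j\le i}(1-D_j)\ge e^{-2}$; combining with a Jensen step on $\sum_i\prod_{j>i}C_j$ yields $Y_t\ge t/(3e^2k)$ at the block level. The constant $3e^2$ therefore comes from $e^{-2}$ on the drop side and $(1-(2/3)^{1/k})\ge 1/(3k)$ on the chain side, not from a sweet spot near $\rho\sim 1/(ek)$. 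Finally, because the block analysis covers only blocks $3,\dots,N$, the single-price piece must absorb the first \emph{two blocks}, i.e.\ $2k$ agents with a $2k$-approximation, rather than the $k{+}1$ agents you proposed; the mixing weights then balance exactly as you wrote.
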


The proof of this theorem mimics the $1$-ambiguous case. We split agents into blocks of size $k$ such that an agent in block $t$ cannot be confused with any agents in blocks $<t-1$, then a technical lemma analogous to Lemma~\ref{lem:k1-cd} bounds the drop and chain rates between blocks to achieve an $O(k)$ approximation to the revenue from blocks $3$ to $n/k$. Finally, a single price mechanism covers the revenue from the top two blocks.

---

\begin{proof} To begin, we split agents into $N = \lceil n/k\rceil$ blocks, such that block 1 contains agents $1$ through $k$, block 2 contains agents from $k+1$ to $2 k$ and so on. Notice that as previously agents in block $i$ cannot be confused with agents in blocks $<i-1$. Let $A_i$ be the lowest value an agent in block $i$ can take, i.e. $A_i = a_{i \cdot k}$.

We will first approximate the revenue contribution of blocks $3$ to $N$. The main ideas follow the 1-ambiguous proof. For each block $i$, we randomly pick a number of items $j$ to price ``high:'' the top $j$ items in block $i$ are priced at $A_{i-1}$, and the remaining $k-j$ items are priced at $A_i$. A block ``drops'' if we over-estimate the number of bidders who are willing to pay $A_{i-1}$; if block $i$ drops, then the auction will not allocate to any bidders in blocks $>i$. Similarly, a block ``chains'' if we underestimate the number of bidders who are willing to pay $A_{i-1}$; if a block chains, then the auction will not be able to charge $A_{i-1}$ to any bidder, since there will be too many bidders willing to pay $A_{i-1}$.

Formally, we set prices for each block as follows:
\begin{enumerate}
\item Sample $j$ according to the distribution $R_{i,j}$. ($\sum_{j=0}^kR_{i,j}=1$)
\item Set the prices for the first $j$ items in the block at $A_{i-1}$ and set prices for the remaining $k-j$ items at $A_i$:
\[p_i=\begin{cases}A_{\lfloor i/k\rfloor}&\mbox{if }i-\lfloor i/k\rfloor \leq j\\A_{\lceil i/k\rceil}&\mbox{otherwise.}\end{cases}\]
\end{enumerate}
We set $R_{1, \cdot}=(1,0,0,0,...,0)$ so that all agents of block $1$ are assigned a price of $A_1$.

To define chain and drop probabilities, let $Q_{i, j}$ be the probability that exactly $j$ bidders in block $i$ have value greater or equal to $A_{i-1}$. We define $Q_{1 \cdot} = (1,0,0,...,0)$. We define the associated chain probability $C_i$ as the likelihood that the number of agents in block $i$ who are willing to pay $A_{i-1}$ strictly exceeds the number of prices in the block that were set at $A_{i-1}$:
\[C_i = \sum_{j=0}^{k-1} \sum_{j'=j+1}^{k} R_{i, j} Q_{i, j'}\enspace.\]
Similarly, we define the associated drop probability $D_i$ as the likelihood that the number of agents in block $i$ who are willing to pay $A_{i-1}$ is strictly less than the number of prices that were set at $A_{i-1}$:
\[D_i =  \sum_{j=0}^{k-1}  \sum_{j'=j+1}^{k} Q_{i, j} R_{i, j'}\enspace.\]

We claim that agents in block $i$ pay at least $A_t$ for some $t \ge i$ as long as (a) no block $\leq i$ ``drops,'' and (b) at least one block $j\in\{i+1,\dots t+1\}$ does not ``chain.'' Note that if no block $\leq i$ drops, then all bidders in blocks $\leq i$ have $v\geq p$ and will therefore get allocated. This happens with probability $\prod_{j=1}^i (1-D_j)$. If any block $j\in\{i+1,\dots,t+1\}$ does not chain, then we know that the number of bidders asked to pay $A_{j-1}$ cannot be higher than the number of bidders asked; consequently, $A_{j-1}$ will be a lower-bound on the price paid by bidders in block $i$. The likelihood that at least one such block does not chain is $1 - \prod_{j=i+1}^{t+1} C_j$.

Thus, if we define $X_t$ as the expected number of blocks whose agents pay $A_t$ and $Y_t=\sum_{i=1}^tX_i$ be the expected number of blocks where all agents pay at least $A_t$, then:
\begin{align*}
Y_t &\ge \sum_{i=1}^t Pr[\textrm{Agents in block $i$ pay at least $A_t$}]  \ge  \sum_{i=1}^t  \left[ \left (1 - \prod_{j=i+1}^{t+1} C_j \right) \prod_{j=1}^i (1-D_j) \right]
\end{align*}
We use the following lemma to relate $C_i$'s and $D_i$'s.

\begin{lemma}
We can choose $R_{i, \cdot}$ such that $D_i \le \rho$ and $C_i \le 1-\rho^{1-\frac 1 {k+1}}$ for any $\rho \in [0,1]$.
\end{lemma}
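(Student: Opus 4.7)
The plan is to generalize the argument of Lemma~\ref{lem:k1-cd} by constructing a full distribution $R_{i,\cdot}$ over $\{0,1,\ldots,k\}$ rather than tuning a single Bernoulli parameter. First I would rewrite the drop and chain probabilities in closed form using the CDF $F_Q(r)=\sum_{j\le r}Q_{i,j}$ (with $F_Q(-1)=0$):
\begin{align*}
D_i = \sum_r R_{i,r}\,F_Q(r-1), \qquad C_i = \sum_r R_{i,r}\,(1-F_Q(r)).
\end{align*}
This makes explicit that the achievable pairs $(D_i,C_i)$ form the convex hull of the $k+1$ extreme points $(F_Q(r-1),1-F_Q(r))$, so the lemma reduces to locating a point in this convex hull that lies below $(\rho,\,1-\rho^{1-1/(k+1)})$.

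Next I would introduce the geometric sequence of thresholds $T_j=\rho^{1-j/(k+1)}$ for $j\in\{0,1,\ldots,k+1\}$, noting that $T_0=\rho$, $T_{k+1}=1$, and $T_j/T_{j-1}=\rho^{-1/(k+1)}$. For each $j$, define the \emph{landmark} $r_j=\min\{r\in\{0,\ldots,k\}:F_Q(r)\ge T_j\}$; by construction $F_Q(r_j-1)<T_j\le F_Q(r_j)$, and the landmarks are nondecreasing in $j$. A deterministic choice $R_i=r_j$ gives $(D_i,C_i)$ satisfying $D_i<T_j$ and $C_i\le 1-T_j$, so landmarks with small $j$ are good for the $D_i$ bound while landmarks with $j\ge 1$ are good for the $C_i$ bound. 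I would then define $R_{i,\cdot}$ as an explicit mixture over the landmarks $r_0,r_1,\ldots,r_{k+1}$ whose weights $\pi_j$ are tuned so that (i) the contributions to $D_i$ telescope into a geometric series bounded by $T_0=\rho$, and (ii) the total mass on $j\ge 1$ is large enough to pull $C_i$ below $1-T_1$.

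The main obstacle I expect is verifying that such a weighting is feasible for every possible shape of $F_Q$. A naive two-point mixture over $\{r_0,r_1\}$ can fail: for instance, when $F_Q$ jumps from just below $\rho$ to just below $T_1$, the $C_i$ constraint forces almost all mass onto $r_1$ while the $D_i$ constraint forces almost all mass onto $r_0$. The fix is to spread the mass over more landmarks, and the geometric spacing $T_j/T_{j-1}=\rho^{-1/(k+1)}$ is precisely what provides enough room: picking $\pi_j$ geometrically (e.g., proportional to $T_{j+1}-T_j$, possibly rescaled), the weighted sums admit explicit closed forms that can be bounded using $T_0 T_{k+1}=\rho\cdot 1$ and the telescoping $\sum_j(T_{j+1}-T_j)=1-\rho$. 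The feasibility check then reduces to a short algebraic computation that mirrors the optimization step in the $k=1$ case where $(1-r_i)q_i$ was maximized at $q_i=1-\sqrt\rho$.
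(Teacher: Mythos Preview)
Your setup is fine: the closed forms for $D_i$ and $C_i$ in terms of the CDF $F_Q$ are correct, and the geometric threshold sequence $T_j=\rho^{1-j/(k+1)}$ is exactly the right object. But the plan to construct a mixture over many landmarks $r_0,\ldots,r_{k+1}$ is more machinery than the lemma needs, and as written the proposal stops at the point where the real work begins: you never specify the weights $\pi_j$, and ``the feasibility check then reduces to a short algebraic computation'' is not yet a proof.

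The paper takes a much simpler route that stays with two-point distributions, just as in the $k=1$ case. For each candidate $s\in\{0,\ldots,k\}$, put weight $R_{i,s}=\min\bigl(1,\rho/\hat Q_{i,s-1}\bigr)$ on $s$ and the remaining mass on $0$. This choice \emph{saturates} the drop constraint by design, since $D_i=\hat Q_{i,s-1}R_{i,s}\le\rho$ for every $s$. One then has $C_i\le 1-R_{i,s}\hat Q_{i,s}$, and the lemma follows if \emph{some} $s$ makes $R_{i,s}\hat Q_{i,s}\ge\rho^{1-1/(k+1)}$. This is proved by contradiction: assuming $R_{i,s}\hat Q_{i,s}<\rho^{1-1/(k+1)}$ for every $s$, start at $j^*=\min\{j:\hat Q_{i,j}\ge\rho\}$ and show inductively that $\hat Q_{i,z}<\rho^{(k-(z-j^*))/(k+1)}$ for $z\ge j^*$, which is exactly your geometric ladder $T_j$ in disguise. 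At $z=k$ this gives $\hat Q_{i,k}<1$, a contradiction.

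So your diagnosis that a two-point mixture over $\{r_0,r_1\}$ can fail is correct, but the fix is not to spread mass over more landmarks; it is to keep the two points as $\{0,s\}$, calibrate the weight on $s$ to hit $D_i=\rho$ exactly, and then pigeonhole over the $k+1$ values of $s$. Your threshold sequence $T_j$ is precisely what drives that pigeonhole, so you had the key ingredient but deployed it in a harder construction. If you do want to push through the multi-landmark mixture, you would need to exhibit explicit $\pi_j$ and verify both inequalities against the worst-case placements of $F_Q$ relative to the $T_j$'s; that is doable in principle but strictly more work than the paper's argument.
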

\begin{proof}
Let $\hat Q_{i, j} = \sum_{z=0}^j Q_{i, z}$ and $\hat Q_{i,-1} = 0$. We consider distributions $R_{i,j}$ that take the following form:
\[R_{i,j}=\begin{cases}\min(1, \rho / \hat Q_{i, (s-1)})&\mbox{if }j=s\\
1 - R_{i, s}&\mbox{if }j=0\\
0&\mbox{otherwise}\end{cases}\]
where $s$ is a parameter that will be chosen later. Notice that for any $s$, we have that:
$$D_i =  \sum_{j=0}^{k-1}  \sum_{z=j+1}^{k} Q_{i, j} R_{i, z} = \sum_{z=1}^{k}  \sum_{j=0}^{z-1} Q_{i, j} R_{i, z} =  \sum_{j=0}^{s-1} Q_{i, j} R_{i, s} = \hat Q_{i, (s-1)} R_{i, s} \le \rho$$
and
$$C_i = \sum_{j=0}^{k-1} \sum_{z=j+1}^{k} R_{i, j} Q_{i, z} = \sum_{j=0}^{k-1}  R_{i, j} (1-\hat Q_{i, j}) \le 1 - R_{i, s} \hat Q_{i, s}$$

We are now ready to prove that there exists choice of $s$ 
such that $C_i \le 1-\rho^{1-\frac 1 {k+1}}$. We will do this by assuming the contrary, namely that $C_i >  1-\rho^{1-\frac 1 {k+1}}$ 
for any choice of $s$, and reach a contradiction. Note that this assumption immediately implies that $R_{i, j} \hat Q_{i, j} \le 1 - C_i < \rho^{1-\frac 1 {k+1}}$ for any $j$. Before we begin, note that $\hat Q_{i,-1}=0$, $\hat Q_{i,k}=1$, and $\hat Q_{i,j}$ is monotone in $j$.

We let $j^*$ be the smallest $j$ that $\rho \le \hat Q_{i, j}$ and show inductively that for any $z \ge j^*$, $$\hat Q_{i, z} <  \rho^{\frac {k - (z - j^*)} {k+1}}\enspace.$$

{\em Base case $z=j^*$:} When $z=j^*$, we can choose $s=z$. Since $\hat Q_{i, (z-1)} \le \rho$, we have that $R_{i, z} = 1$ and thus we get $\hat Q_{i, z} = R_{i, z} \hat Q_{i, z} < \rho^{1-\frac 1 {k+1}}$ (using our contrary assumption). 

{\em Inductive step:} Now, we assume the hypothesis holds for some $z$ and prove it holds for $z+1$. We could choose $s=z+1$, in which case $R_{i, {(z+1)}} = \rho / \hat Q_{i, z }$	. Since $R_{i, {(z+1)}} \hat Q_{i, {(z+1)}} <  \rho^{1-\frac 1 {k+1}}$, we get that 
$$\hat Q_{i, {(z+1)}} < \frac { \rho^{1-\frac 1 {k+1}} } {R_{i, {(z+1)}}} <
\frac { \hat Q_{i, z } \rho^{1-\frac 1 {k+1}} } {\rho} < 
\frac { \rho^{\frac {k - (z - j^*)} {k+1}} \rho^{1-\frac 1 {k+1}} } {\rho} = \rho^{\frac {k - (z+1 - j^*)} {k+1}}$$ which completes the proof of the induction.

We can now reach a contradiction by seeing that $\hat Q_{i, k} = 1 <  \rho^{\frac {k -  (k - j^*) } {k+1}} = 
\rho^{\frac { j^*} {k+1}} \le 1$.
\end{proof}

Applying the lemma with $\rho =\frac 1{(k i)^{1+1/k}}$ gives $R_{i, \cdot}$ such that $D_i \le \frac 1 {(k i)^{1+1/k}}$ and $C_i \le (1-\frac 1 {k i})$ for $i \ge 2$. We have that:
\begin{align*}
\prod_{j=1}^i (1-D_j) 
&\ge \prod_{j=2}^i \exp(- \frac {D_j} {1- D_j}) \ge \exp(- 2 \sum_{j=2}^i  D_j) & \textrm{since } {D_j  < \frac 1 2} \\
& \ge \exp(- \frac 2 k \sum_{j=2}^i  j^{- (1+1/k)} )  \ge  \exp(- \frac 2 k \int_{1}^i x^{- (1+1/k)} dx )  \\
&= \exp(- \frac 2 k  [ - k x^{- 1/k} ]_1^i ) \ge e^{ -2 }
\end{align*}

Moreover, we have that

\begin{align*}
\sum_{i=1}^t \prod_{j=i+1}^{t+1} C_j &\le \sum_{i=1}^t \prod_{j=i+1}^{t+1} (1-\frac 1 {k j}) \le \sum_{i=1}^t \exp(-  \sum_{j=i+1}^{t+1} \frac 1 {k j})\\
&\le \sum_{i=1}^t \exp(-  \frac 1 {k} \int_{i+1}^{t+2} \frac 1 {x} dx) \\
&\le \sum_{i=1}^t \exp(-  \frac {\log(\frac{t+2}{i+1})} {k}) = \sum_{i=1}^t \left( \frac {i+1} {t+2} \right)^{1/k} \\
& \le t \left( \frac { \sum_{i=1}^t (i+1)/t} {t+2} \right)^{1/k} & \textrm{Jensen's inequality for } x^{1/k} \\
&= t \left( \frac {t+3} {2 (t+2)} \right)^{1/k} \le t (2/3)^{1/k}
\end{align*}

So overall we have that $Y_t \ge e^{ -2 } t (1- (2/3)^{1/k}) \ge \frac { t } { 3 e^2 k }$.

The expected revenue of the randomized pricing scheme will be at least $\sum_{t=1}^{N-1} kA_tX_t$.\footnote{We ignore the last block since it might have fewer than $k$ agents.} Since each $Y_t =\sum_{i=1}^tX_t\ge \frac { t } { 3 e^2 k }$, we get that the expected revenue by the randomized pricing scheme is at least $ \sum_{t=1}^{N-1} \frac {k A_t } {3 e^2 k} > \sum_{t=3}^N \frac {Rev[\textrm{Block}_t]} {3 e^2 k}$ since $k A_t > Rev[\textrm{Block}_{t+2}]$.

To bound the revenue of the first two blocks we use the single price lemma that says that a single price $p$ is a $2 k$-factor approximation for $2 k$ distributions. If we choose prices $p_1 = ... = p_n = p$ with probability $2/(3 e^2 + 2)$ or the pricing scheme that is defined above with probability $3 e^2/(3 e^2 + 2)$, we get an expected revenue of at least:
$$\frac 2 {3 e^2 + 2} \left( \frac {Rev[\textrm{Block}_1] + Rev[\textrm{Block}_2]} { 2 k } \right) +
\frac {3 e^2} {3 e^2 + 2} \left( \frac {\sum_{t=3}^N Rev[\textrm{Block}_t] } {3 e^2 k } \right) = \frac {\sum_{t=1}^N Rev[\textrm{Block}_t] } {(3 e^2 + 2) k}$$

Since we are randomizing over pricing schemes there exists a single pricing scheme that achieves the necessary approximation. This completes the proof and shows an $O(k)$ approximation.\end{proof}


\subsection{Extension to $m$-goods and position auctions}

We extend the results of the previous section from digital goods, where we have
an unlimited supply of identical goods, to the $m$-unit setting where we have $m$ copies of a good and to position auctions.

\begin{definition}[Position auction]
In a position auction, there are $m$ items are for sale, each with a scale factor $s_j \in [0,1]$. We assume that $s_1 \ge s_2 \ge ... \ge s_m$. The utility of an agent $i$ with value $v_i$ that receives an item $j$ and pays $p$ is equal to $ s_j v_i - p$.
\end{definition}
In sponsored search auctions auctions, the items are slots on a page of search results and the scale factors correspond to the click through rate of the slot.

\begin{theorem}
In any $m$-good or position auction setting, there exists an anonymous mechanism that achieves an approximation of $O(k)$ for $k$-ambiguous distributions.
\end{theorem}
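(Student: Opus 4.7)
The plan is to extend the Decreasing Price Mechanism (DPM) to both settings while preserving the ranked-allocation and chain-payment structure, then adapt the block-based revenue analysis of the previous section. I would handle the $m$-good case first since it requires only a minor modification to the DPM, then extend to position auctions by introducing per-slot scaling.

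For $m$-good auctions, I would use the DPM essentially unchanged with $m$ prices $p_1\ge\dots\ge p_m$: sort bids in decreasing order, allocate items only while $b_i\ge p_i$ and $i\le m$, and charge each winner $i$ the price $p_{\underbar j(i)}$ exactly as before. The cap at $m$ is a monotone, rank-dependent restriction, so anonymity, ex-post IR, and DSIC carry over from the digital-goods DPM lemma without any change. The block-based analysis transfers directly: form blocks of size $k$ among the top $m$ agents, apply the Chain/Drop Lemma to bound $D_i$ and $C_i$ identically, obtain an $O(k)$ approximation to the revenue of blocks $3$ through $\lceil m/k\rceil$, and cover the top two blocks via a randomization with a single-price mechanism.

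For position auctions, I would use base prices $q_1\ge\dots\ge q_m$ in an analogous DPM: sort bids in decreasing order, assign slot $i$ to the $i$-th ranked bidder as long as $b_i\ge q_i$, and stop at the first failure. To maintain DSIC in this setting, payments must follow Myerson's integral: if the bidder at rank $i$ wins slot $i$, her payment is the telescoping sum $\sum_{j\ge i}(s_j-s_{j+1})\,q_{\underbar j(j)}$ (with $s_{m+1}=0$), which is precisely the Myerson payment for the monotone allocation rule $A(v)=s_{\mathrm{rank}(v)}$. Anonymity holds because ranks and $s_j$'s depend only symmetrically on the bids, and ex-post IR follows from the Myerson structure. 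The revenue analysis mirrors the digital-goods case: the optimal anonymous and DPM mechanisms both match higher-ranked bidders to higher-scale slots, so both the mechanism's revenue from block $t$ and the optimal revenue from that block scale with the same slot-weight sum $\sum_{j\in\mathrm{block}_t}s_j$. The $s_j$ factors cancel in the per-block ratio, and the Chain/Drop Lemma yields the same $O(k)$ bound, with the top two blocks handled by the single-price randomization.

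The main technical obstacle is verifying DSIC in the position-auction setting. Bidders now have a richer deviation space because changing their bid can move them to a different rank with a different slot quality $s_j$, so reusing the digital-goods payment rule is no longer sufficient. The critical step is to define the payment as a Myerson telescoping sum so that the bidder's marginal utility from moving between adjacent ranks vanishes at truth-telling; this, combined with monotonicity of the rank-based allocation, gives DSIC via the standard single-parameter characterization. Once this payment rule is in place, the block-based $O(k)$ analysis transfers essentially verbatim from the digital-goods proof.
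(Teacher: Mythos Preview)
Your position-auction extension has two gaps.

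First, the payment rule $\sum_{j\ge i}(s_j-s_{j+1})\,q_{\underbar j(j)}$ is not the Myerson payment for the allocation ``slot $i$ to the rank-$i$ bidder, stop at first failure.'' When there is no chain below rank $i$ (i.e.\ $b_{i+1}<q_i$), lowering the rank-$i$ bidder's bid below $q_i$ triggers a drop at rank $i$ and she gets nothing; her allocation jumps from $0$ to $s_i$ at threshold $q_i$, so the Myerson payment is simply $s_i q_i$, not your telescoping sum. When there \emph{is} a chain ($b_{i+1}\ge q_i$), she can fall to rank $i{+}1$ and still receive slot $i{+}1$, so the threshold between slots $i$ and $i{+}1$ is $b_{i+1}$, and the correct payment contains the term $(s_i-s_{i+1})b_{i+1}$ rather than $(s_i-s_{i+1})q_{\underbar j(i)}$. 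Since Myerson payments are uniquely determined by a monotone allocation, the mechanism as you state it is not DSIC. The paper sidesteps this by a different allocation: each served bidder at rank $i$ is assigned slot $\underbar j(i)$ (not slot $i$) and charged $s_{\underbar j(i)}\,p_{\underbar j(i)}$; when several bidders map to the same slot, feasibility is restored by randomizing over higher unused slots with scaled probability. This turns each bidder's decision into a single take-it-or-leave-it offer, so DSIC is immediate and the revenue accounting is exactly the digital-goods DPM analysis scaled by $s_{\underbar j(i)}$.

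Second, the claim that ``the $s_j$ factors cancel in the per-block ratio'' skips the hard step: you still need an upper bound on the optimal revenue that has the same block-by-block shape. The optimal position auction assigns slots by ironed virtual value, not by value rank, so a block-$t$ bidder may receive slot $1$ and contribute up to $s_1\,Rev[F_i]$; nothing forces her contribution to scale with the slot weights inside block $t$. The paper handles this with a dedicated lemma: it splits OPT into an auction on the top $k{+}1$ bidders (bounded by $s_1\sum_{i\le k+1}Rev[F_i]$) and an auction on the rest, and bounds the latter by $\sum_{i=1}^{n-k-1}s_i a_i$ via a resampling argument exploiting $b_i<a_{i-k-1}$. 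Only with this upper bound in hand does the block-DPM lower bound line up to give $O(k)$. Your $m$-goods sketch has the same issue in milder form: the optimal $m$-unit auction may serve bidders outside the top $m$ value ranks, so ``blocks among the top $m$ agents'' is not automatically a decomposition of OPT.
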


Since $m$-good auctions are a special case of position auctions with $s_j=1$ for all $j$, it suffices to prove this theorem for position auction settings.
Moreover, we can assume w.l.o.g. that $m=n$ since we can always add additional items with $s_j = 0$.

The first step in proving the theorem is getting an upper bound on the revenue of the optimal mechanism.

\begin{lemma}
For any position auction setting with $k$-ambiguous distributions, the maximum achievable revenue is at most $\sum_{i=1}^{k+1} s_1 Rev[F_i] + 
\sum_{i=1}^{n-k-1} s_i a_i$, where $Rev[F_i]$ is the Myerson revenue of agent $i$'s distribution.
\end{lemma}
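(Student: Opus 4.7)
The plan is to upper-bound the revenue $R$ of any DSIC, ex-post IR position-auction mechanism by decomposing $R=\sum_{i=1}^{k+1}R_i+\sum_{i=k+2}^{n}R_i$ and bounding the ``top'' and ``bottom'' contributions with different tools.

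For the top bidders $i\in\{1,\dots,k+1\}$, I would apply Myerson's lemma to each bidder individually (conditioning on $v_{-i}$). Since the expected scale factor assigned to any single bidder never exceeds $s_1$, we have $\alpha_i(v_i,v_{-i})\phi_i(v_i)\le s_1\,\phi_i^+(v_i)$ pointwise (using $\alpha_i\ge0$ when $\phi_i\le0$ and $\alpha_i\le s_1$ otherwise), and integrating yields $R_i\le s_1\cdot Rev[F_i]$. Summing over $i\in\{1,\dots,k+1\}$ produces the first term of the target bound.

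For the remaining bidders, I would first dominate revenue by welfare via ex-post IR ($R_i\le E[\alpha_i(\vec v)v_i]$) and then use a rearrangement argument. The feasibility polytope of position-auction allocations, restricted to any subset $S$ of bidders, is dominated (via Birkhoff--von Neumann on the bipartite assignment polytope) by the welfare of optimally matching $S$ to the top $|S|$ slots, so pointwise
\[\sum_{i=k+2}^{n}\alpha_i(\vec v)v_i \;\le\; \sum_{j=1}^{n-k-1} s_j\,w_{(j)}(\vec v),\]
where $w_{(j)}(\vec v)$ is the $j$-th largest among $v_{k+2},\dots,v_n$. The crucial step is then a counting argument using $k$-ambiguity: if a bidder $i\ge k+2$ realizes $v_i\ge a_j$, then $b_i\ge a_j$ forces $a_{i-k-1}>a_j$, hence $i\le j+k$; but only $j-1$ such indices lie in $\{k+2,\dots,n\}$. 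So fewer than $j$ bidders in the bottom group can exceed $a_j$, giving $w_{(j)}<a_j$ almost surely, and taking expectations yields $\sum_{i=k+2}^{n}R_i\le\sum_{j=1}^{n-k-1}s_j\,a_j$.

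The main technical obstacle is the majorization step for randomized allocations: one must verify that the feasible expected-allocation polytope, restricted to a subset $S$, is dominated in the rearrangement sense by the top $|S|$ slots. This is a standard consequence of Birkhoff--von Neumann, but it is the place where randomization and feasibility interact and where care is required. Combining the Myerson bound for the top bidders with the welfare-plus-ambiguity bound for the bottom ones gives exactly the stated inequality.
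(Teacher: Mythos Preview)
Your argument is correct and yields the same bound, but by a genuinely different route than the paper's.

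The paper decouples the top $k+1$ bidders from the bottom $n-k-1$ bidders by a simulation argument: it duplicates the item set, runs the original optimal mechanism separately on each group with the missing bids resampled from their true distributions, and observes that total revenue can only go up. It then relaxes the top auction to unlimited copies of slot $1$, giving $\sum_{i\le k+1}s_1\,Rev[F_i]$; for the bottom auction it replaces each distribution $F_{i+k+1}$ by the point mass at $a_i$, argues via $k$-ambiguity ($a_i>b_{i+k+1}$) that IR is preserved under this substitution, and notes that the optimal revenue in the resulting point-mass instance is exactly $\sum_{i=1}^{n-k-1}s_ia_i$.

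You instead stay inside the original mechanism and bound per-bidder contributions directly: Myerson's lemma together with $\alpha_i\le s_1$ handles the top group, and for the bottom group you pass to welfare and use a Birkhoff--von Neumann / rearrangement step on the restricted allocation polytope. The $k$-ambiguity enters through the clean counting observation that at most $j-1$ bottom bidders can realize a value $\ge a_j$, hence $w_{(j)}<a_j$ pointwise. Your approach is more analytic and avoids the two-auction decoupling; the paper's is more operational and perhaps makes the shape of the bound more transparent. Both are short.

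One small caveat on your top-group step: ``integrating yields $R_i\le s_1\,Rev[F_i]$'' implicitly uses $E[\phi_i^+(v_i)]=Rev[F_i]$, which is true only when $F_i$ is regular; for irregular $F_i$ the set $\{\phi_i>0\}$ need not be an upper interval and $E[\phi_i^+]$ can strictly exceed $Rev[F_i]$. The fix is immediate: replace $\phi_i$ by the ironed virtual value $\bar\phi_i$ (monotonicity of $\alpha_i$ in $v_i$ gives $E[\alpha_i\phi_i]=E[\alpha_i\bar\phi_i]$), or equivalently rescale $(\alpha_i,\mathcal P_i)$ by $1/s_1$ and invoke single-bidder optimality of a posted price directly.
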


\begin{proof}
Consider a setting where we have an additional copy of every item and we run 2 auctions instead of one:
\begin{itemize}
\item Auction A: The $k+1$ first agents are participating.
\item Auction B: The $n-(k+1)$ last agents are participating.
\end{itemize}
We claim that the revenue in this setting is not less than the revenue from the original auction by arguing that the following mechanism gets exactly the same revenue as before. Let $M$ be the optimal mechanism in the original setting. Run $M$ in each auction by sampling bids from the distributions of the missing agents. It is easy to see that the first auction achieves revenue equal to the revenue contribution of the first $k+1$ agents in the original auction while the second one achieves revenue equal to the revenue contribution of the last $n-(k+1)$ agents in the original auction. So it suffices to bound the revenue of each of the two auctions separately.

For the auction A, we can see that an auction A' where there are unlimited copies of item 1 (with scale factor $s_1$) would give us at least as much revenue since scale factors of the items could be artificially reduced to match those in auction $A$. Therefore the revenue of the first auction is upper bounded by $\sum_{i=1}^{k+1} s_1 Rev[F_i]$.

For the auction B, we can see that an auction B' where each agent comes from a point distribution with value $a_i$ instead of the original distribution supported in $[a_{i+k+1},b_{i+k+1}]$ would achieve at least as much revenue. This is because, in auction B', the bids of each agent can be completely ignored and resampled from the previous distribution for every agent and then run the optimal mechanism for auction B. This mechanism is definitely DSIC since it doesn't depend at all at the agent bids and is IR since $a_i > b_{i+k+1}$ which means that an agent in auction $B'$ will always afford to pay the asked price. The revenue in auction B' is exactly 
$\sum_{i=1}^{n-k-1} s_i a_i$ which gives us the upper bound for auction B.
\end{proof}

We now try to construct a mechanism that achieves good approximation guarantees compared to the bound on revenue we've proven. We alter slightly the rules of the decreasing price mechanism with the following rule. Agents that are asked to pay price $p_j$ in the original DPM will get item $j$ and pay $s_j p_j$. 

Notice that although more than 1 agent may be assigned to item $j$, the effect of item $j$ can be simulated by giving an item with higher scale factor $j' < j$ to each additional agent but with probability $s_j / s_{j'}$. Since for every $j$ there are at most $j$ agents assigned to items 1 through $j$ this mechanism is feasible. This mechanism is also DSIC since if agent $i$ is priced $p_j$ any bid higher than $p_j$ gets him exactly the same price and allocation while any bid lower than $p_j$ gets him dropped of the mechanism without any item.
Therefore bidding his real value is preferable since the mechanism is clearly IR.

We use the randomized construction of the previous section to create the DPM mechanism. Under this construction, a price of at least $A_t$ is assigned to $\frac t {3 e^2 k}$ blocks in expectation which gives a revenue at least $\sum_{t=1}^{N-1} k A_t S_t / (3 e^2 k)$ where $S_t = s_{t k}$ is the scale factor of the item agents that are priced $A_t$ receive. 
Since $k A_t S_t \ge \sum_{i=t k }^{t k+k-1} s_i a_i$, we get revenue of at least $\sum_{i=k}^{n-k-1} \frac {s_i a_i} {3 e^2 k}$.

To bound the remaining terms of the revenue, we use second price auction with a single price $p$ to sell just the first item. With probability $1/2$ we set $p$ to be the Myerson reserve of the first agent while with probability $1/(2k)$ we set $p$ to be the the reserve price of the $i$-th agent for $i=2$ to $k+1$. This is an anonymous mechanism and gets
revenue at least $ s_1 (k  Rev[F_1] + \sum_{i=2}^{k+1}  Rev[F_i] ) / (2 k)$.

If we run the DPM mechanism with probability $2/(3 e^2 + 2)$ or the second price auction with probability $3 e^2/(3 e^2 + 2)$, we get an expected revenue of at least:
$$\frac 2 {3 e^2 + 2} \left( \frac { s_1 (k  Rev[F_1] + \sum_{i=2}^{k+1}  Rev[F_i] )} { 2 k } \right) +
\frac {3 e^2} {3 e^2 + 2} \left( \sum_{i=k}^{n-k-1} \frac {s_i a_i} {3 e^2 k} \right) $$
which is at least
$$\frac {\sum_{i=1}^{k+1} s_1 Rev[F_i] + 
\sum_{i=1}^{n-k-1} s_i a_i} {(3 e^2 + 2) k}$$
This completes the proof and shows an $O(k)$ approximation.


\section{Conclusion}

Anonymity imposes real constraints on an auction and, as we have seen, on the revenue it can achieve. In the worst case, we have shown that anonymous mechanisms are quite limited, and that the best anonymous mechanism cannot substantially beat a simple single price. The real advantage of an anonymous mechanism is directly related to the auctioneer's ability to infer information about $f_i$ and $v_i$ from the bids of other advertisers, $\vec v_{-i}$, in essence circumventing the ex-ante anonymity requirement.

Our work leaves a few immediate open questions about anonymous auctions with limited ambiguity. We showed that anonymous auctions can achieve a $\Theta(k)$ approximation for general $k$-ambiguous distributions. For single price mechanisms, we saw that the worst-case approximation improves from $\Theta(n)$ to $\Theta(\log n)$ when distributions are regular --- {\em can we show an analogous $\Theta(\log k)$ bound in the $k$-ambiguous setting when distributions are regular?} Another interesting research direction is to identify alternative metrics for measuring ambiguity. For example, {\em what can we say about the revenue from an anonymous auction when the differential entropy between $f_i$ and the inferred posterior $h$ is small?}

More broadly, our work suggests many general questions about anonymous mechanisms. {\em Can anonymous auctions achieve good approximations beyond the settings we have studied?} Interesting dependencies arise outside the digital goods setting because one bidder's bid can affect the auctioneer's inference about another bidder, affecting the outcome of the auction in a complicated way.  Another question is one of computational complexity --- {\em how difficult is it to compute the optimal anonymous auction?}

%

\bibliographystyle{plain}

\bibliography{references}

\appendix


\section{Characterizing Optimal Anonymous Mechanisms}\label{sec:opt-char-appendix}

In this section, we generalize our characterization theorem from Section~\ref{sec:opt-char} in the special case where the inferred poster $h$ is regular. Recall the observations we made:
\begin{observation}[Observation~\ref{obs:opt-uniform-permutation}]
The optimal anonymous mechanism remains optimal if we randomly rename bidders before running the auction.
\end{observation}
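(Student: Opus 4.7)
The plan is to observe that random renaming is literally an identity operation on any anonymous mechanism, from which the observation follows almost immediately. Concretely, let $\mathcal{M}=(\mathcal{A},\mathcal{P})$ be any anonymous, DSIC, ex-post IR mechanism, and let $\mathcal{M}_{\text{rand}}$ denote the mechanism that first samples a uniformly random permutation $\pi$ of $\{1,\dots,n\}$, applies it to the submitted bid vector $\vec v$ to form $\pi(\vec v)$, runs $\mathcal{M}$ on $\pi(\vec v)$, and finally permutes the resulting allocation and payment vectors back by $\pi^{-1}$ so that the output is indexed by the original bidder names.

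First I would invoke the definition of anonymity directly: for every permutation $\pi$ and every $\vec v$, permuting the bids by $\pi$ permutes the corresponding outputs by the same $\pi$, so $\pi^{-1}\mathcal{A}(\pi(\vec v))=\mathcal{A}(\vec v)$ and $\pi^{-1}\mathcal{P}(\pi(\vec v))=\mathcal{P}(\vec v)$. Thus for every realization of $\pi$ and every input $\vec v$, the mechanism $\mathcal{M}_{\text{rand}}$ yields the same allocation and payment vectors as $\mathcal{M}$. Taking expectation over $\pi$, the joint output distribution is identical to that of $\mathcal{M}$; in particular the expected revenue $\sum_i \mathbb{E}[\mathcal{P}_i(\vec v)]$ is unchanged.

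It remains to record that DSIC and ex-post IR are preserved. Since $\mathcal{M}_{\text{rand}}$ coincides pointwise with $\mathcal{M}$ for every realization of $\pi$, the same inequalities that certify DSIC and IR for $\mathcal{M}$ certify them for $\mathcal{M}_{\text{rand}}$; alternatively, one may note that $\mathcal{M}_{\text{rand}}$ is a distribution over relabelings of the DSIC/IR mechanism $\mathcal{M}$, and both properties are closed under such mixtures. Consequently, if $\mathcal{M}$ is revenue-optimal among anonymous, DSIC, ex-post IR mechanisms, so is $\mathcal{M}_{\text{rand}}$.

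The statement is thus essentially a tautology once anonymity is unpacked: anonymity already encodes invariance of the outcome distribution under bidder permutations, so explicit random renaming adds nothing. The only ``obstacle,'' which is really bookkeeping, is being careful that the $\pi^{-1}$ step is applied consistently to both $\mathcal{A}$ and $\mathcal{P}$ so that the final outputs are correctly indexed by the original bidders; this is what lets the anonymity identity $\pi^{-1}\mathcal{A}(\pi(\vec v))=\mathcal{A}(\vec v)$ cancel the random renaming exactly.
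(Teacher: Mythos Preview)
Your proposal is correct and matches the paper's intent: the paper states this as an observation without proof, treating it as immediate from the definition of anonymity, and your argument is exactly the unpacking of that definition (permuting inputs permutes outputs, so random relabeling followed by the inverse relabeling is the identity on an anonymous mechanism). There is nothing more to add.
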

\begin{observation}[Observation~\ref{obs:opt-symmetric-beliefs}]
Suppose that prior beliefs $\vec F$ are symmetric (possibly correlated). Then there exists a symmetric mechanism that maximizes revenue.
\end{observation}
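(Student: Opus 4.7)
The plan is to prove this observation by a direct \emph{symmetrization} argument: take any optimal (not necessarily symmetric) DSIC and ex-post IR mechanism $M = (\mathcal{A},\mathcal{P})$ for $\vec F$, and construct a symmetric mechanism $M'$ that achieves the same expected revenue while preserving DSIC and ex-post IR. The construction is to draw a uniformly random permutation $\pi \in \Pi(n)$, apply $\pi$ to the bid vector (so bidder $i$'s reported value is treated as if it came from bidder $\pi(i)$), run $M$ on the permuted input, and finally undo the permutation so that the outcome for bidder $i$ in $M'$ is precisely the outcome for bidder $\pi(i)$ under $M$ on the permuted bids.

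First I would check that $M'$ is symmetric as a function of the bid vector. If $\sigma$ is any permutation, then running $M'$ on $\sigma \vec v$ is equivalent to drawing $\pi$ uniformly and running $M$ on $\pi \sigma \vec v$, then inverting by $(\pi\sigma)^{-1}\sigma = \pi^{-1}$ composed with $\sigma^{-1}$. Reparameterizing by $\pi' = \pi \sigma$ (which is still uniform since the uniform distribution on $\Pi(n)$ is invariant under right multiplication), this is exactly $M'(\vec v)$ followed by relabeling outputs through $\sigma^{-1}$, which is the defining condition of a symmetric mechanism.

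Next I would verify that $M'$ inherits DSIC and ex-post IR. For every realization of $\pi$, the conditional mechanism is just $M$ with its bidders relabeled, which is DSIC and ex-post IR by assumption. Since these are pointwise properties that hold for every $\pi$, they are preserved after averaging over $\pi$: no deviation by any bidder can improve her utility in any realization, and her utility is always nonnegative in any realization, hence also in expectation over $\pi$ and the mechanism's internal randomness.

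Finally I would compare expected revenues. When $\vec v \sim \vec F$ and $\pi$ is an independent uniform permutation, symmetry of $\vec F$ implies that $\pi \vec v$ has the same distribution as $\vec v$ itself. Therefore the expected revenue of $M'$ on $\vec F$ equals the expected revenue of $M$ on $\vec F$, which is optimal, giving a symmetric optimal mechanism as claimed. The only subtle point is the invariance-of-distribution step in the previous sentence, which requires using that $\vec F$ is symmetric as a joint distribution (not merely that the marginals coincide); this is exactly the hypothesis, so no real obstacle arises and the argument is essentially bookkeeping once the symmetrization $M'$ is written down.
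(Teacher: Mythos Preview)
Your proposal is correct and is precisely the symmetrization argument the paper has in mind. The paper does not give a formal proof of this observation at all; it is stated as self-evident, with only the preceding sentence (``bidders can be relabeled without affecting the mechanism's expected revenue'') as justification, so your write-up simply makes explicit what the paper leaves implicit.
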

These observations immediately lead to the following claim that reduces symmetric optimization to general optimization:
\begin{claim}[Claim \ref{clm:symmetric-general-reduction}]Any mechanism that is optimal among dominant strategies IC and ex-post IR mechanisms for the symmetric distribution
\[g(\vec x)=\frac {1} {n!} \sum_{\pi \in \Pi(n)} \prod_{i \in N} f_i( x_{\pi_i})\]
can be transformed into a mechanism that is optimal among symmetric, DSIC, and ex-post IR auctions for the beliefs $\vec F$ by relabeling bidders according to a uniformly random permutation.
\end{claim}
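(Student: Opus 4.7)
The plan is to chain the two preceding observations into an explicit transformation. The key preliminary fact I would establish is that the symmetrized density $g$ is precisely the law of $\pi^{-1}(\vec v)$ when $\vec v \sim \vec F$ and $\pi$ is a uniformly random permutation of $\{1,\dots,n\}$. Given any DSIC, ex-post IR mechanism $M$ for the correlated distribution $g$, I would build a candidate mechanism $M'$ for $\vec F$ as follows: sample a uniformly random permutation $\pi$, relabel the submitted bids by $\pi$, run $M$ on the relabeled bids, then apply $\pi^{-1}$ to the resulting allocations and payments before handing them back to the original bidders. By construction $M'$ is symmetric, and its expected revenue on $\vec F$ equals the expected revenue of $M$ on $g$.

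Next I would verify that $M'$ inherits DSIC and ex-post IR from $M$. For every fixed realization of $\pi$, $M'$ is just $M$ run on a permuted index set, so the DSIC and IR inequalities that hold for $M$ at the permuted bid profile translate directly into DSIC/IR inequalities for $M'$ at the original profile. Since these inequalities are pointwise in the bids they survive the mixture over $\pi$, so $M'$ is a legitimate symmetric, DSIC, ex-post IR mechanism for $\vec F$. If $M^*$ attains the supremum of revenues among DSIC/ex-post IR mechanisms for $g$, this yields a lower bound: there is a symmetric mechanism for $\vec F$ matching that revenue.

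For the matching upper bound I would combine Observations~\ref{obs:opt-uniform-permutation} and~\ref{obs:opt-symmetric-beliefs}. Observation~\ref{obs:opt-uniform-permutation} implies that any symmetric mechanism has the same expected revenue under $\vec F$ as under $\vec F$ composed with a uniformly random renaming of bidders, and the latter is exactly the revenue under $g$; hence the supremum of revenues achievable by symmetric mechanisms on $\vec F$ equals the corresponding supremum on $g$. Observation~\ref{obs:opt-symmetric-beliefs} in turn identifies that restricted supremum with the unrestricted supremum over DSIC/ex-post IR mechanisms on $g$. Combining the two bounds shows that the transformed mechanism obtained from any optimal $M^*$ is itself optimal among symmetric, DSIC, ex-post IR mechanisms for $\vec F$. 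The only step that requires real care --- the main obstacle --- is verifying that the mixture over $\pi$ preserves DSIC and ex-post IR as \emph{ex-post} guarantees rather than merely in expectation; this is resolved by observing that the defining inequalities hold realization by realization under each $\pi$, so they pass through an arbitrary mixture over permutations.
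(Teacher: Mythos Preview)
Your proposal is correct and follows exactly the line the paper intends: the paper states that the claim follows ``immediately'' from Observations~\ref{obs:opt-uniform-permutation} and~\ref{obs:opt-symmetric-beliefs} without writing out a proof, and your argument is precisely the natural unpacking of that remark. The only addition you supply beyond the paper is the explicit verification that DSIC and ex-post IR survive the mixture over permutations, which you handle correctly by the realization-by-realization argument.
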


To understand the optimal symmetric mechanism, we use the following theorem of Roughgarden and Talgam-Cohen~\cite{RT13}:
\begin{theorem}[Roughgarden and Talgam-Cohen~\cite{RT13}]
In a private, correlated values setting, the optimal mechanism $\mathcal M=(\mathcal A,\mathcal P)$ that is both DSIC and ex-post IR is the following, as long as $\phi$ is monotone ($h$ is regular):
\begin{enumerate}
\item Elicit values $\vec v$ from the bidders.
\item For each bidder $i$, infer a posterior belief $h(v_i|\vec v_{-i})$ about the distribution of $v_i$ from other bidders values $\vec v_{-i}$. Let $\phi^{h|\vec v_{-i}}(v_i)=v_i-\frac{1-H(v|\vec v_{-i})}{h(v_i|v_{-i})}$ denote the virtual value of a bidder with respect to this inferred distribution.
\item Maximize virtual value $\sum_i\phi^{h|\vec v_{-i}}(v_i)$ under the posterior beliefs and charge according to $\mathcal P(\vec v)=v_i\mathcal A_i(\vec v)-\int_0^{v_i}\mathcal A_i(\vec v_{-i},z)dz$ (i.e. traditional single-parameter payments~\cite{M81,AT01}).
\end{enumerate}
\end{theorem}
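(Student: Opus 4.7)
The plan is to mirror the derivation already carried out in the proof of Theorem~\ref{thm:optimal}, but without the digital-goods shortcut that let the allocation be chosen arbitrarily. Starting from the DSIC/IR characterization, the payment identity $\mathcal{P}_i(\vec v) = v_i \mathcal{A}_i(\vec v) - \int_0^{v_i} \mathcal{A}_i(\vec v_{-i}, z)\,dz$ combined with Fubini's theorem yields, exactly as in the digital-goods proof,
\[
R_i = \int_{\Re_+^{n-1}} \left(\int_0^\infty g(\vec v_{-i}, z)\,dz\right) \int_{\Re_+} h(v_i|\vec v_{-i})\,\mathcal{A}_i(\vec v_{-i}, v_i)\,\phi^{h|\vec v_{-i}}(v_i)\,dv_i\,d\vec v_{-i}.
\]
Summing over $i$ expresses the total expected revenue as the expected virtual welfare with respect to the inferred posteriors $h(\,\cdot\,|\vec v_{-i})$, where the outer expectation is taken over $\vec v_{-i}$ weighted by the marginal $\int g(\vec v_{-i},z)\,dz$.

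Because the integrand factors pointwise over realizations of $\vec v$, the revenue-optimal allocation is the one that, for each fixed $\vec v$, chooses $\mathcal{A}(\vec v)$ to maximize $\sum_i \phi^{h|\vec v_{-i}}(v_i)\,\mathcal{A}_i(\vec v)$ subject to the feasibility constraints of the environment. The remaining obligation is to verify that this pointwise maximizer satisfies Myerson's monotonicity requirement that $\mathcal{A}_i(\vec v_{-i}, v_i)$ is non-decreasing in $v_i$ for every fixed $\vec v_{-i}$, so that the Myerson payment formula is well-defined and the mechanism is genuinely DSIC rather than merely Bayesian IC.

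This monotonicity check is, in my view, the key obstacle, and it is precisely where the regularity hypothesis on the inferred posterior enters. When $\phi^{h|\vec v_{-i}}$ is monotone non-decreasing in $v_i$, raising $v_i$ can only (weakly) increase the virtual contribution of bidder $i$ relative to the other coordinates, so a standard tie-breaking argument shows that the pointwise maximizer can be taken to be monotone in $v_i$; without regularity, one would need an ironing step and the clean closed-form statement would fail. Ex-post IR follows because the allocation can be taken to be zero at $v_i=0$ (any negative virtual value is dominated by allocating nothing), which makes $\mathcal{P}_i(\vec v_{-i},0)=0$ and keeps the payment bounded above by $v_i\mathcal{A}_i(\vec v)$ via monotonicity of the payment integral. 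The resulting allocation-payment pair matches the statement and is optimal among DSIC, ex-post IR mechanisms for the correlated distribution $g$, which by Claim~\ref{clm:symmetric-general-reduction} transfers back to the optimal symmetric mechanism for $\vec F$.
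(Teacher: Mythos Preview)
The paper does not supply its own proof of this statement --- it is quoted from Roughgarden and Talgam--Cohen~\cite{RT13} and used as a black box to obtain the appendix characterization. The only in-paper argument along these lines is the digital-goods special case in Theorem~\ref{thm:optimal}, which you correctly identify as the template.

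Your revenue-equals-expected-conditional-virtual-surplus identity is fine, and pointwise maximization is the right move. The gap is in the monotonicity step. You claim that regularity of $h(\,\cdot\,|\vec v_{-i})$ makes the pointwise maximizer monotone in $v_i$ because ``raising $v_i$ can only (weakly) increase the virtual contribution of bidder $i$ relative to the other coordinates.'' In a correlated setting this is not true: for every $j\neq i$, the virtual value $\phi^{h|\vec v_{-j}}(v_j)$ depends on $\vec v_{-j}$, which \emph{contains} $v_i$. Increasing $v_i$ therefore shifts every other bidder's conditional distribution and virtual value as well, and regularity of $h(\,\cdot\,|\vec v_{-i})$ alone does not prevent some $\phi^{h|\vec v_{-j}}(v_j)$ from overtaking $\phi^{h|\vec v_{-i}}(v_i)$ as $v_i$ grows. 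Handling this cross-dependence is exactly what separates the correlated case from the independent one and is the substantive content of~\cite{RT13}; a ``standard tie-breaking argument'' does not cover it. (Your final sentence invoking Claim~\ref{clm:symmetric-general-reduction} is also out of place: the theorem concerns an arbitrary correlated prior, not the specific symmetrized $g$, so no transfer step belongs in this proof.)
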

Given this characterization and the preceding claim, our general characterization theorem is immediate:

\begin{theorem}
The optimal anonymous mechanism is the following, as long as $\phi$ is monotone ($h$ is regular):
\begin{enumerate}
\item Elicit values $\vec v$ from the bidders.
\item Imagine that values were drawn from a correlated distribution $g(\vec v)$ where
\[g(\vec x)=\frac {1} {n!} \sum_{\pi \in \Pi(n)} \prod_{i \in N} f_i( x_{\pi_i})\]
and, for each bidder $i$, use the values $\vec v_{-i}$ of other bidders to infer a posterior belief $h(v_i|\vec v_{-i})=\frac{g(\vec v_{-i},v_i)}{\int_0^\infty g(\vec v_{-i},z)dz}$ about the distribution of $v_i$. Let $\phi^{h|\vec v_{-i}}(v_i)=v_i-\frac{1-H(v|\vec v_{-i})}{h(v_i|v_{-i})}$ denote the virtual value of a bidder with respect to this inferred distribution.
\item Maximize virtual value $\sum_i\phi^{h|\vec v_{-i}}(v_i)$ under the posterior beliefs and charge according to $\mathcal P(\vec v)=v_i\mathcal A_i(\vec v)-\int_0^{v_i}\mathcal A_i(\vec v_{-i},z)dz$ (i.e. traditional single-parameter payments~\cite{M81,AT01}).
\end{enumerate}
\end{theorem}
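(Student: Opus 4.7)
The plan is to derive the theorem directly by composing Claim~\ref{clm:symmetric-general-reduction} with the Roughgarden--Talgam-Cohen characterization. The Claim reduces the problem of finding an optimal \emph{symmetric}, DSIC, ex-post IR mechanism for the product prior $\vec F$ to the problem of finding an optimal (a priori asymmetric) DSIC, ex-post IR mechanism for the symmetrized, correlated prior $g$, up to a uniformly random relabeling of bidders that can be applied as a preprocessing step. Since the goal is to characterize the optimal anonymous mechanism for $\vec F$, I would first invoke Observation~\ref{obs:opt-uniform-permutation} to argue that I lose no generality by assuming such a random relabeling occurs, and then invoke Observation~\ref{obs:opt-symmetric-beliefs} applied to $g$ (which is symmetric by construction) to conclude that an optimal mechanism for $g$ can be taken to be symmetric --- hence anonymous in the sense required.

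Having made the reduction, the proof reduces to characterizing the optimal DSIC, ex-post IR mechanism for the correlated prior $g$. This is exactly the content of the quoted Roughgarden--Talgam-Cohen theorem, provided the inferred posterior $h(\cdot \mid \vec v_{-i})$ is regular (so that $\phi^{h|\vec v_{-i}}$ is monotone). Their theorem dictates that the seller infers $h(v_i \mid \vec v_{-i})$ from $\vec v_{-i}$, maximizes the sum of inferred virtual values $\sum_i \phi^{h|\vec v_{-i}}(v_i)$ in the allocation step, and charges the standard single-parameter payments. The only computation I need to verify is the explicit formula for $h$, namely
\[
h(v_i \mid \vec v_{-i}) \;=\; \frac{g(\vec v_{-i}, v_i)}{\int_0^\infty g(\vec v_{-i}, z)\,dz},
\]
which is immediate from Bayes' rule applied to $g$ viewed as a joint density.

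Finally, I would note that the optimal mechanism obtained this way is already symmetric (the inference rule and the virtual-value maximization both treat bidders interchangeably because $g$ is symmetric), so the random pre-relabeling promised by Claim~\ref{clm:symmetric-general-reduction} is vacuous and the mechanism stated in the theorem is literally optimal among anonymous, DSIC, ex-post IR mechanisms for $\vec F$. The main conceptual obstacle is justifying the passage from ``anonymous with respect to the asymmetric prior $\vec F$'' to ``symmetric with respect to the symmetric prior $g$''; this is where Observations~\ref{obs:opt-uniform-permutation} and~\ref{obs:opt-symmetric-beliefs} together with Claim~\ref{clm:symmetric-general-reduction} do all the work, and once this reduction is in place the rest is a direct application of the Roughgarden--Talgam-Cohen machinery. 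The regularity hypothesis on $h$ is inherited from their theorem and cannot be removed without a more delicate ironing argument, which is why the digital goods version (Theorem~\ref{thm:optimal}) can dispense with it --- there the allocation is unconstrained and monotonicity suffices.
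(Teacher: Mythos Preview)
Your proposal is correct and follows essentially the same route as the paper: reduce optimal anonymous mechanism design for $\vec F$ to optimal (unconstrained) DSIC, ex-post IR design for the symmetrized correlated prior $g$ via Claim~\ref{clm:symmetric-general-reduction}, and then apply the Roughgarden--Talgam-Cohen characterization to $g$ under the regularity hypothesis on $h$. The paper in fact states the theorem as ``immediate'' from exactly these two ingredients; your additional remarks about Bayes' rule for the form of $h$ and the symmetry of the resulting mechanism are helpful elaborations but do not depart from the paper's argument.
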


The characterization we gave in Section~\ref{sec:opt-char} of the optimal symmetric auction for digital goods is then an immediate corollary:
\begin{corollary}
The optimal anonymous digital goods auction sets the optimal price for each bidder according to the posterior belief $h$.
\end{corollary}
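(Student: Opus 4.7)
The plan is to deduce this corollary directly from the general characterization theorem just above, specializing to the digital goods setting. By Claim~\ref{clm:symmetric-general-reduction}, it suffices to exhibit the optimal DSIC, ex-post IR mechanism for the symmetric correlated distribution $g$; the general theorem then tells us that such a mechanism must maximize $\sum_i \phi^{h|\vec v_{-i}}(v_i)\,\mathcal{A}_i(\vec v)$ pointwise in $\vec v$ (subject to monotonicity of $\mathcal{A}_i$ in $v_i$) and charge the standard single-parameter payments.

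The key observation is that in the digital goods setting, the only feasibility constraint on the allocation is $\mathcal{A}_i(\vec v)\in[0,1]$ for each $i$, independently. Hence the pointwise maximization of $\sum_i \phi^{h|\vec v_{-i}}(v_i)\,\mathcal{A}_i(\vec v)$ decouples across bidders: fixing $\vec v_{-i}$, we independently choose a monotone function $v_i\mapsto\mathcal{A}_i(\vec v_{-i},v_i)\in[0,1]$, and the objective reduces to
\[
\int h(v_i\mid \vec v_{-i})\,\mathcal{A}_i(\vec v_{-i},v_i)\,\phi^{h|\vec v_{-i}}(v_i)\,dv_i,
\]
which is exactly the expected revenue from selling a single item to one bidder whose value is distributed according to $h(v_i\mid \vec v_{-i})$.

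By Myerson's single-bidder analysis, the revenue-maximizing DSIC, IR mechanism for one bidder is a posted price at the revenue-maximizing price for her distribution. Applying this conclusion for each $i$ and each $\vec v_{-i}$ yields precisely the stated mechanism: offer bidder $i$ the item at the revenue-maximizing posted price for $h(v_i\mid \vec v_{-i})$. Since the single-bidder revenue-maximization problem is solved by a posted price regardless of regularity of $h$, the digital-goods statement actually holds without invoking the regularity hypothesis used for the general theorem; one can alternatively read this off directly from the derivation of $R_i$ given in the proof of Theorem~\ref{thm:optimal}.

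There is no real obstacle here: the whole argument is essentially the decoupling step enabled by unlimited supply, and the only point requiring any care is noting that the per-bidder integrand coincides with a one-buyer expected revenue expression, at which point Myerson's posted-price characterization closes the argument.
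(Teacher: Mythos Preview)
Your proposal is correct and follows essentially the same route as the paper: the appendix presents this corollary as immediate from the general characterization theorem, and you have simply spelled out the one-line reason---in digital goods the allocation constraints decouple across bidders, so virtual-welfare maximization reduces to a single-bidder problem solved by a posted price. Your remark that regularity of $h$ is not actually needed here is also consistent with the paper, since the same statement is proved directly (without any regularity hypothesis) as Theorem~\ref{thm:optimal} in Section~\ref{sec:opt-char}.
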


Also, the extreme cases noted in Section~\ref{sec:opt-char} behave similarly:
\begin{corollary}
If the distributions $f_i$ are point distributions (bidders' values are known precisely to the auctioneer), have non-overlapping support, or are the same for all bidders, then the optimal anonymous mechanism coincides with Myerson's optimal mechanism.
\end{corollary}
In all three cases, the posterior distribution inferred from $\vec v_{-i}$ is precisely $f_i$, therefore the auction precisely identifies each bidder and runs the optimal auction.

\end{document}